\documentclass[conference]{IEEEtran}
\IEEEoverridecommandlockouts

\usepackage{tikz} 
\usepackage{amsmath,amssymb,amsfonts}
\usepackage{graphicx}
\usepackage{textcomp}
\usepackage{bmpsize}
\usepackage{xcolor}
\usepackage{lipsum}
\usepackage[ruled,linesnumbered,vlined]{algorithm2e}
\usepackage{longtable,supertabular,booktabs}
\usepackage{threeparttablex,caption}
\usepackage{makecell}
\usepackage{multirow}
\usepackage{bbding}
\usepackage{wasysym}
\usepackage{iitem}
\usepackage{latexsym}
\usepackage{stmaryrd}
\usepackage{ulem}
\usepackage{array}
\usepackage{booktabs}
\usepackage{pifont}
\usepackage{subcaption}
\usepackage[]{hyperref} 
\usepackage{cleveref}
\usepackage{amsthm}
\usepackage{ulem}
\usepackage{bm}

\usepackage{xspace}
\newcommand{\etal}{et al.\xspace}

\newcommand{\ie}{i.e.,\xspace}
\newcommand{\eg}{e.g.,\xspace}

\let\oldnl\nl
\newcommand{\nonl}{\renewcommand{\nl}{\let\nl\oldnl}}
\newcommand{\sys}{\textsl{\mbox{GTree}}\xspace}
\crefformat{section}{\S#2#1#3} 
\crefformat{subsection}{\S#2#1#3}
\crefformat{subsubsection}{\S#2#1#3}






\newcommand{\ZZ}{\mathbb{Z}}
\newcommand{\RR}{\mathbb{R}}

\newcommand{\share}[1] {\langle #1 \rangle}


\def\BibTeX{{\rm B\kern-.05em{\sc i\kern-.025em b}\kern-.08em
    T\kern-.1667em\lower.7ex\hbox{E}\kern-.125emX}}
\newtheorem{theorem}{Theorem}

\begin{document}
%
\title{\sys: GPU-Friendly Privacy-preserving Decision Tree Training and Inference}

\author{
\IEEEauthorblockN{
    Qifan Wang\IEEEauthorrefmark{1}\IEEEauthorrefmark{2},
    Shujie Cui\IEEEauthorrefmark{3},
    Lei Zhou\IEEEauthorrefmark{4},
    Ye Dong\IEEEauthorrefmark{5},
    Jianli Bai\IEEEauthorrefmark{6},
    Yun Sing Koh\IEEEauthorrefmark{1},
    and Giovanni Russello\IEEEauthorrefmark{1}
}
\\
\IEEEauthorblockA{\IEEEauthorrefmark{1}University of Auckland, New Zealand}
\IEEEauthorblockA{\IEEEauthorrefmark{2}University of Birmingham, United Kingdom}
\IEEEauthorblockA{\IEEEauthorrefmark{3}Monash University, Australia} 
\IEEEauthorblockA{\IEEEauthorrefmark{4}National University of Defense Technology, China}
\IEEEauthorblockA{\IEEEauthorrefmark{5}Singapore University of Technology and Design, Singapore}
\IEEEauthorblockA{\IEEEauthorrefmark{6}Singapore Management University, Singapore}
\thanks{Accepted at TrustCom'24, December 17-21, 2024, Sanya, China.}
}

\maketitle

\begin{abstract}
Outsourcing Decision tree (DT) training and inference to cloud platforms raises privacy concerns. 
Recent Secure Multi-Party Computation (MPC)-based methods are hindered by heavy overhead. Few recent studies explored GPUs to improve MPC-protected deep learning, yet integrating GPUs into MPC-protected DT with massive data-dependent operations remains challenging, raising question: \textit{can MPC-protected DT training and inference fully leverage GPUs for optimal performance?}

We present \sys, the first scheme that exploits GPU to accelerate MPC-protected secure DT training and inference. 
\sys is built across 3 parties who jointly perform DT training and inference with GPUs. 
\sys is secure against semi-honest adversaries, ensuring that no sensitive information is disclosed. 
\sys offers enhanced security than prior solutions, which only reveal tree depth and data size while prior solutions also leak tree structure. 
With our \textit{oblivious array access}, access patterns on GPU are also protected. 
To harness the full potential of GPUs, we design a novel tree encoding method and craft our MPC protocols into GPU-friendly versions. 
\sys achieves ${\thicksim}11{\times}$ and ${\thicksim}21{\times}$ improvements in training SPECT and Adult datasets, compared to prior most efficient CPU-based work. 
For inference, \sys outperforms the prior most efficient work by $126\times$ when inferring $10^4$ instances with a 7-level tree. 
\end{abstract}



%

\section{Introduction}
\label{sec:intro}
The decision tree (DT) is a powerful and versatile Machine learning (ML) model. Its excellent interpretability makes it a popular choice for various applications, such as medical diagnosis~\cite{azar2013decision} and weather prediction~\cite{wang2018short}. 
To efficiently train a DT and make predictions, a common solution is outsourcing the tasks to cloud platforms. However, this poses privacy risks, as the underlying data may be sensitive.
For privacy-sensitive applications like healthcare, all data samples, trained models, inference results, and any intermediate data generated during the training and inference should be protected from the Cloud Service Provider (CSP).
This need has driven the development of privacy-preserving DT training and inference (PPDT). 

The typical way to develop PPDT is employing cryptographic primitives, such as {Secure Multi-Party Computation (MPC/SMC)}~\cite{abspoel2020secure,adams2022privacy,emekcci2007privacy,hoogh2014practical,lindell2000privacy,samet2008privacy,tueno2019private} and Homomorphic Encryption (HE)~\cite{bost2015machine,wu2016privately,tai2017privacy,akavia2022privacy}, or a combination of both~\cite{liu2020towards}. 
MPC enables multiple CSPs to jointly compute without revealing any party's private inputs, and HE allows the CSP to perform computations over ciphertexts without decryption. 
Nevertheless, most existing approaches are still not secure enough. 
Earlier works~\cite{emekcci2007privacy,lindell2000privacy} focus mainly on the input data privacy while not considering the model. 
Some works~\cite{hoogh2014practical,liu2020towards,samet2008privacy}, leak statistical information and tree structures from which the adversary could infer sensitive information~\cite{adams2022privacy,chatel2021sok}.  
Existing approaches also suffer from heavy computation and communication overheads. 
For instance, the recent PPDT method proposed in~\cite{liu2020towards} takes over 20 minutes and ${\thicksim}2$ GB communication cost to train a tree of depth 7 with 958 samples and 9 categorical features. 

Some recent works~\cite{ohrimenko2016oblivious,law2020secure,wang2022enclavetree} show that Trusted Execution Environment (TEE) is more efficient for designing PPDT. 
With TEE's protection, the data can be processed in plaintext. 
However, TEEs such as Intel SGX~\cite{costan2016intel} have limited computational power compared with Graphic Processing Units (GPUs). 
Processing all the DT tasks inside the TEE still cannot achieve ideal performance. 
Recent NVIDIA H100 GPU \cite{h100gpu2022} extending TEE into GPU has a low performance-to-price ratio in most applications due to the high price. 

Hardware acceleration is crucial in the evolution of modern ML. 
Recent works~\cite{ng2021gforce,tan2021cryptgpu,watson2022piranha,jawalkar2023orca} have used GPUs to accelerate secure deep learning. 
They can take full advantage of GPU due to massive GPU-friendly operations (\eg convolutions and matrix multiplications). 
This raises the question: \textit{can secure DT training and inference benefit from GPU acceleration?}

\noindent\textbf{Our goals and challenges.}
In this work, we aim to design a GPU-based system to securely train a DT model and make predictions.
Specifically, our approach does not reveal any information other than the input size and tree depth.

GPUs are especially effective when processing grid-like structures such as arrays and matrices~\cite{cuda2022programming}.
To better utilize GPU acceleration, \sys represents the DT and data structures as arrays and performs training and inference on arrays. 
However, processing them securely on GPU is non-trivial because it suffers from access pattern leakage, which enables an adversary to infer tree shape~\cite{ohrimenko2016oblivious,wang2022enclavetree}.
Thus, \textit{the first challenge is to protect access patterns on GPU}.

To protect the data and DT, we employ MPC in training and inference.
However, inappropriate MPC protocols could impede GPU acceleration. 
GPU will be sufficiently used when performing a large number of simple arithmetic computations (\eg addition and multiplication) on massive data in parallel. 
The operations involve a large number of conditional statements, modular reduction, and non-linear functions, e.g., exponentiation and division are less well-suited for GPU. 
For instance, CryptGPU~\cite{tan2021cryptgpu} shows that GPU achieves much less performance gain when evaluating non-linear functions in private neural network training.
Thus, \textit{the second challenge is to design \textit{GPU-friendly} MPC protocols for DT training and inference so as to take full advantage of GPU parallelism}.

\noindent\textbf{Our contributions.}
In this work, we introduce \sys, a GPU-based privacy-preserving DT training and inference addressing the aforementioned challenges. 
Rather than focusing solely on improving MPC protocols, \textit{\sys primarily explores the possibility of combining GPU and MPC-based DT}, as done in recent GPU-based neural network schemes such as CryptGPU~\cite{tan2021cryptgpu}, Piranha~\cite{watson2022piranha}, and Orca~\cite{jawalkar2023orca}. 
To the best of our knowledge, \sys is the first to deploy secure DT training and inference on GPU.
Furthermore, \sys achieves better security guarantees than existing solutions. Particularly, \sys conceals all data items,  the tree shape, statistical information, and the access pattern for both DT training and inference.

\sys relies on secret-sharing-based MPC protocols, which are better suited to GPU than garbled circuits (GC)~\cite{yao1986generate}. 
ABY$^3$~\cite{mohassel2018aby3} and Piranha~\cite{watson2022piranha} show that, in the presence of a semi-honest adversary with an honest majority, 3-Party Computation (3PC) with \textit{2-out-of-3 replicated secret sharing (RSS)}~\cite{araki2016high} is more efficient than $2/N$-PC ($N>3$) in runtime and communication cost.
Thus, \sys involves 3 non-colluding CSPs to securely share data and model via \textit{2-out-of-3 RSS}. 
The main contributions are summarized as follows:

\textbf{Access pattern protection.} For both training and inference, access patterns to an array should be protected, even when encrypted. Otherwise, adversaries could deduce sensitive information. For example, data similarities can be inferred from tree access patterns, and if an adversary knows some patterns, they could further uncover more tree information.
We design a GPU-friendly \textit{Oblivious Array Access} protocol which accesses every element of the array and uses a secret-sharing-based select function (\ie $\mathtt{SelectShare}$~\cite{wagh2020falcon}) to ensure that only the desired element is actually read or written.

\textbf{GPU-friendly design.} To make DT training and inference GPU-friendly, our main idea is to parallelize as many operations as possible. 
Firstly, \textcolor{black}{in addition to ensuring that the tree is always complete}, \sys trains layer-wisely based on our novel tree encoding method. 
In doing so, adversaries can only learn the tree depth. 
Such design also allows for extensive parallelization. 
Secondly, all the designed protocols mainly involve simple arithmetic operations and minimized conditional statements so that they are highly parallelable.

During DT training, \sys calculates Gini Index to select the best feature. 
The designed MPC-based protocol uses the scaling function~\cite{watson2022piranha} and secure division~\cite{wagh2020falcon}, potentially leading to accuracy loss.
To mitigate this, we introduce a TEE-assisted method that enables the oblivious computation of the Gini Index within TEE using oblivious primitives~\cite{ohrimenko2016oblivious}.

\textbf{Extensive experimental evaluations.} We implemented \sys based on GPU-based MPC platform, Piranha~\cite{watson2022piranha}, and evaluated it on a server with 3 NVIDIA Tesla V100 GPUs.
In this work, we focus on processing binary categorical features. 
The results show that \sys takes about 0.31 and 4.32 seconds to train with SPECT and Adult datasets, respectively, which is ${\thicksim}11{\times}$ and ${\thicksim}21{\times}$ faster than the previous work~\cite{hoogh2014practical}
(\sys also provides a stronger security guarantee than it).
We also compare \sys with a TEE-only baseline that processes all tasks within an enclave (baseline uses one of the common TEEs, Intel SGX, as the underlying TEE). 
To train $5{\times}10^4$ data samples with 64 features, \sys outperforms TEE-only by ${\thicksim}27{\times}$. 
For DT inference, \sys excels when the tree has less than 10 levels, surpassing the prior most efficient solution~\cite{kiss2019sok} by ${\thicksim}126\times$ for inferring $10^4$ instances with a depth-7 tree. Compared to TEE-only, \sys is more efficient in inferring a batch of instances when the tree depth is less than 6.

\section{Background}
\label{sec:bg}

\subsection{Decision Tree}
\label{DT_steps}
 A decision tree consists of internal nodes and leaves, where each internal node is associated with a test on a feature, each branch represents the outcome of the test, and each leaf represents a label which is the decision taken after testing all the features on the corresponding path.

\noindent\textbf{Training.}
During training, the DT algorithm learns from a dataset with labeled data samples, where each data contains a set of feature values and a label. 
Basically, building a DT includes 3 steps: (1) \textit{Learning phase}: when a leaf becomes an internal node, partition its data samples into new leaves according to the assigned feature. Then, for each new leaf, count the number of each possible $(value, label)$ pair among its data samples. (2) \textit{Heuristic computation (HC)}: if all data samples on a new leaf have the same label, assign that label to the leaf. Otherwise, calculate heuristic measurements (\sys uses gini index~\cite{quinlan2014c4}) for each feature and choose the best feature to split the leaf. For a feature $s$ and dataset $\bm{D}$, the gini index is defined as follows: $\overline{G}(\bm{D}) = 1 - \sum_{k=1}^{|V_{s_{d-1}}|}(\frac{|\bm{D}_k|}{|\bm{D}|})^2$ and $\overline{G}(\bm{D},s) = {\sum}_{i=0}^{|V_s|-1}\frac{|\bm{D}^{v_{s,i}}|}{|\bm{D}|}\overline{G}(\bm{D}^{v_{s,i}})$, where $|\bm{D}_k|$ is the number of samples containing $k$-th label and $k \in [1,|V_{s_{d-1}}|]$, $|V_s|$ is the number of values of feature $s$ ($s=2$ for binary tree and $s_{d-1}$ is the label) and $\bm{D}^{v_{s,i}}$ represents a subset of $\bm{D}$ containing feature $s$ with value of $v_{s,i}$.
(3) \textit{Node split (NS)}: convert the leaf into an internal node using the best feature.

\noindent\textbf{Inference}
To infer an unlabelled data sample (or instance), from the root node, we check whether the corresponding feature value of the instance is $v_0$ or $v_1$, and continue the test with its left or right child respectively, until a leaf node. 
The label value of the accessed leaf is the inference result.

\subsection{Secret Sharing}

In \sys, an $l$-bit value $x{\in}\ZZ_{2^l}$ is secretly shared among three parties (say $P_1, P_2$, $P_3$) with 2-out-of-3 replicated secret sharing (RSS)~\cite{araki2016high}.
\sys leverages \textit{Arithmetic/Boolean sharing}, denoted as $\share{x}^A$ and $\share{x}^B$, respectively. 
Without special declaration, we compute in $\ZZ_{2^l}$ and omit (mod $2^l$) for brevity.

\noindent\textbf{Arithmetic Sharing.} 
We define the following operations: 
\begin{itemize}
    \item $\mathtt{Share}^A(x)\rightarrow (x_1, x_2, x_3)$: Given input $x \in \ZZ_{2^l}$, it samples $x_1,x_2,x_3{\in}\ZZ_{2^l}$ such that $x_1+x_2+x_3=x$. $P_i$ holds $(x_i,x_{i+1})$.  
    We denote $\share{x}^A=(x_1, x_2, x_3)$.

    \item $\mathtt{Rec}^A(\share{x}^A)\rightarrow x$: Given input $\share{x}^A=(x_1, x_2, x_3)$, it outputs $x= x_1 + x_2 + x_3$. 
    $P_{i}$ receives $x_{i-1}$ from $P_{i-1}$ and reconstructs $x$ locally.

    \item \textit{Linear operation}: 
    Given public constants ${\alpha},{\beta}~{\in}~\ZZ_{2^l}$, $P_i$ computes their respective shares of $\share{{\alpha}x+{\beta}}^A = ({\alpha}x_1+{\beta}, {\alpha}x_2, {\alpha}x_3)$ locally. 

    \item \textit{Multiplication}: $\share{z}^A = \share{x}^A \cdot \share{y}^A$: $z_i = x_iy_i + x_{i+1}y_i + x_iy_{i+1}$. With $(x_i,x_{i+1})$ and $(y_i,y_{i+1})$, $P_i$ compute $z_i$ locally and get the additive shares of $z$. 
    To ensure parties hold replicated shares of $z$, $P_i$ sends $P_{i+1}$ a blinded share $z_i+{\alpha}_i$, where ${\alpha}_1,{\alpha}_2,{\alpha}_3{\in}\ZZ_{2^l}$ and ${\alpha}_1 + {\alpha}_2 + {\alpha}_3 = 0$~\cite{mohassel2018aby3}. 
\end{itemize}

\noindent\textbf{Boolean Sharing.} 
Boolean sharing uses an XOR-based secret sharing. 
For simplicity, given $l$-bit values, we assume each operation is performed $l$ times in parallel.
The semantics and operations are the same as the arithmetic shares except that $+$ and $\cdot$ are respectively replaced by bit-wise $\oplus$ and $\wedge$.

\section{Threat Model of \sys}
\label{sec:overview}

\sys consists of 3 types of entities: Data Owner (DO), Query User (QU), and Cloud Service Provider (CSP). 
\sys employs 3 independent CSPs, such as Facebook, Google, and Microsoft, and they should be equipped with GPUs. 
The DO outsources data samples to the three CSPs securely with RSS
for training the DT, and the QU queries after the training for data classification or prediction.

Following with other three-party-based systems~\cite{mohassel2018aby3,tan2021cryptgpu,wagh2020falcon}, \sys resists a semi-honest adversary with an honest majority among three CSPs.
Specifically, two of them are fully honest, and the third one is semi-honest, \ie follows the protocol but may try to learn sensitive information, \eg tree structure, by analyzing the access patterns and other leakages. The DO and QU are fully trusted. 
We assume three CSPs possess shared point-to-point communication channels and pairwise shared seeds, which are used by AES as a PRNG to generate secure randomness~\cite{wagh2020falcon}.
Attacks~\cite{naghibijouybari2018rendered,jiang2019memory} show that GPUs can be shared between multiple applications, enabling spy applications to monitor and infer the behaviour of victims. 
For instance, recent work~\cite{khairy2020accel} introduced a trace tool to observe the memory traces of the application running on the GPU. 
In this work, we assume adversaries possess this ability to mount similar attacks on GPU and observe the victim's access pattern over the GPU's on-chip memory.
Additionally, \sys does not protect against denial-of-service attacks and other attacks based on physical channels.

In the sub-protocol using TEE, the TEE enclave can reside on any of the three CSPs. 
The TEE enclave is fully trusted. Components beyond the enclave are considered untrusted, as they may infer secrets by observing the memory access patterns. 
In setup, secure channels are established between three CSPs and the enclave using secret keys. 
All shares transmitted between the CSP and the enclave are encrypted with their respective keys, employing AES-GCM.

The security proof can be found in Section~\ref{sec:security}. 
\section{Data and Model Representation}
\label{sec:data_model}

\begin{table}
	\centering
	\footnotesize
	\caption{Notations}
	\label{tab:notation}
	\begin{tabular}{c l}
		\toprule
		\textbf{Notation} & \textbf{Description} \\ \hline
		$\bm{D}$ & Data samples array where $|\bm{D}|=N_D$ \\	\hline
		$d$ & Number of features \\	\hline
		$\bm{S}$ & A sequence of $d$ features, $\bm{S}=(s_0, s_1, {\cdots}, s_{d-1})$ \\	\hline
		$\bm{V}_{s_i}$ & {\makecell[l]{Values of feature $s_i$, $\bm{V}_{s_i}=(v_{i,0}, v_{i,1})$ (binary)}} \\	\hline
            $n_h$ & Number of nodes at level $h$, $n_h=2^h$ \\ \hline
            ${\Psi}_{r_{i}}$ & \makecell[l]{Number of data samples at a leaf containing the $i$-th \\label, where $i{\in}[0,1]$} \\ \hline
		$\overline{G}(\cdot)$ & {\makecell[l]{Heuristic measurement, \eg Gini Index}} \\  \hline
            $\bm{T}$ & {\makecell[l]{Tree array stores tree nodes (\ie internal, leaf and \\ dummy nodes). $\bm{T}[2i+1]$ and $\bm{T}[2i+2]$ are the left \\ and right children of $\bm{T}[i]$, respectively}} \\ \hline
            $\bm{F}$ & {\makecell[l]{Node type array indicates the node type of $\bm{T}[i]$ where \\ $|\bm{F}|= |\bm{T}|$ and $\bm{F}[i]=0,1,2$ corresponds to internal, \\ leaf and dummy node}} \\ \hline
            $\bm{M}$ & {\makecell[l]{An array indicating the leaf each data sample $\bm{D}[i]$ \\ currently belongs to, where $|\bm{M}|=|\bm{D}|$}} \\ \hline
            $c_{(value, label)}$ & Frequency of each $(value, label)$ \\	\hline
            $\bm{C}$ & {\makecell[l]{2D-counter array has 3 rows and $2(d-1)$ columns, \\ storing the frequency of each feature}} \\ \hline
            $\bm{\gamma}$ & {\makecell[l]{An array indicating if each feature has been assigned \\ to a leaf node or not, where $|\bm{\gamma}|=d-1$}} \\
		\bottomrule
	\end{tabular}
\end{table}

This section presents data and DT representation in \sys. 
In the rest of this paper, we will use the notation in Table~\ref{tab:notation}.

\subsection{Data Representation}
We stress that \sys focuses on training categorical features. 
We assume that $\bm{S}=(s_0,s_1,{\cdots},s_{d-1})$ represents a sequence of $d$ features, where each feature $s_i$ having two possible values (\ie binary features): $\bm{V}_{s_i}=(v_{i,0},v_{i,1})$. 
In \sys, we convert $v_{i, 0}$ and $v_{i, 1}$ to 0 and 1, respectively, for all $i \in [0, d-1]$. 
A labeled data sample is represented with $d$ feature values, \eg $\{1, 0, ..., 1\}$, and the last feature value is the label, and unlabelled data samples only have $d-1$ feature values. 
For clarity, we use $r_0$ and $r_1$ to represent the two label values. 
We denote an array of $N_D$ data samples as $\bm{D}$, and each data sample $\bm{D}[i]$ is an array containing $d$ or $d-1$ elements, where  $0{\leq}i{\leq}N_D-1$. 
The DO computes $\mathtt{Share}^A(\bm{D})$ and distributes replicated shares to the 3 CSPs. 

\begin{figure}
\centering
\includegraphics[width=0.95\linewidth]{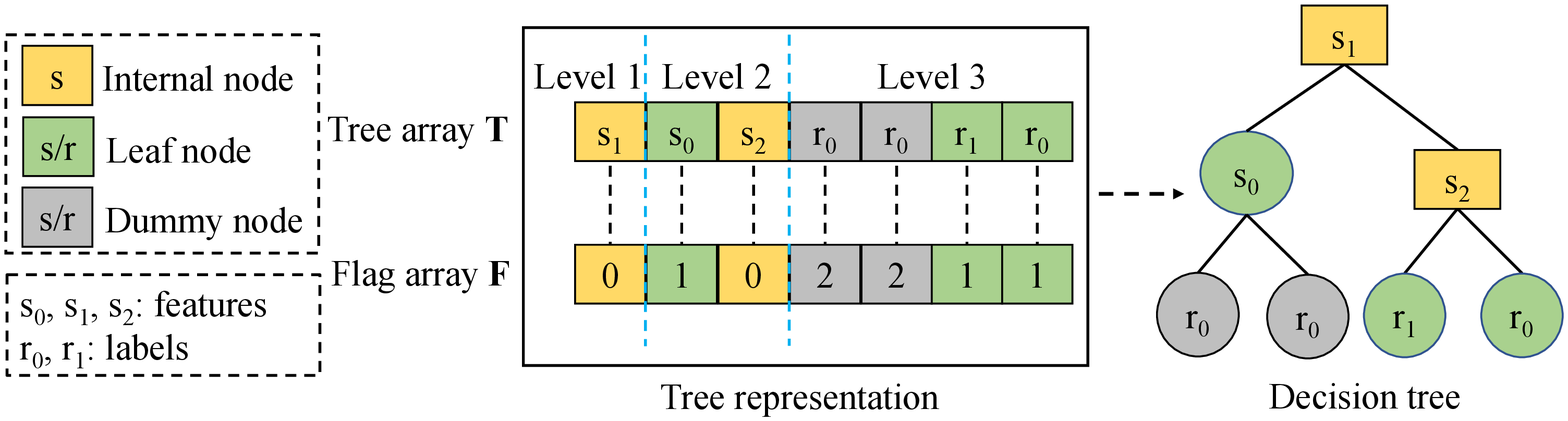}
\caption{Representation of an example tree in \sys. \textnormal{The tree is represented with two arrays $\bm{T}$ and $\bm{F}$, where $|\bm{T}|=|\bm{F}|$. $\bm{T}[i]$ is an internal, leaf, or dummy node, corresponding to $\bm{F}[i]=0,1$, or $2$, respectively. $\bm{T}[i]$ is the assigned feature if it is an internal node, \eg $\bm{T}[0]=s_1$ and $\bm{T}[2]=s_2$. For the dummy and leaf nodes in the last level, $\bm{T}[i]$ is the label of the path, \eg $\bm{T}[3]=r_0$ and $\bm{T}[5]=r_1$; otherwise, $\bm{T}[i]$ is a random feature, \eg $\bm{T}[1]=s_0$ (it is indeed a leaf). Although $\bm{T}[3]$ and $\bm{T}[4]$ are dummy nodes, they play the role of a leaf and store the label of their paths since they are in the last level. Such design not only protects tree shape but also allows for training and inference in a highly parallelized manner.}}
\label{fig:tree encoding}
\end{figure}

\subsection{Model Representation}

Our tree encoding method is presented in Fig.~\ref{fig:tree encoding}. Such design enables \sys to train and infer data in a highly parallelised manner. 
By padding the model into a complete binary tree with dummy nodes, all the paths have the same length. As a result, no matter which path is taken during the training and inference, the processing time will always be the same. 
Since the model is a complete binary tree, the size of $T$ is determined by the depth. The depth of the tree is normally unknown before the training, yet $|\bm{T}|$ should be determined. To address this issue, we initialize $|\bm{T}|$ with the maximum value, \ie $|\bm{T}|=2^d-1$. In case the tree depth $h$ is predefined, $|\bm{T}|=2^h -1$. 
$T$ is initialized with $|\bm{T}|$ leaves. 
Both $T$ and $F$ are shared among the three CSPs with RSS.

\section{Construction Details of \sys}
\label{sec:all protocols}
In this section, we first present one basic block about how CSPs access arrays without learning access patterns.
We then describe how \sys obliviously performs 3 steps of DT training: \textit{learning phase, heuristic computation, and node split}. 
Finally, we present the details of secure DT inference. 

\subsection{Oblivious Array Access}
\label{sec:protocol:oaa}
To take advantage of GPU's parallelism, \sys represents both the tree and data samples with arrays. 
Although they are all protected under RSS, the CSPs can observe the access patterns.
There are two common techniques to conceal access patterns: linear scan and ORAM.
Considering the basic construction of linear scan involves a majority of highly parallelizable operations like vectorized arithmetic operations, we construct our Oblivious Array Access protocol ${\prod}_{\rm OAA}$ based on linear scan, where three CSPs jointly scan the whole array and obliviously retrieve the target element.

\noindent{\textbf{Equality test:}} ${\prod}_{\rm OAA}$ needs to obliviously check the equality between two operands or arrays with their shares.  
We modified the \textit{Less Than} protocol in Rabbit~\cite{makri2021mathsf} to achieve that, denoted as $\share{e}^B \leftarrow {\prod}_{\rm EQ}(\share{x}^A, \share{y}^A)$. Specifically, $e=1$ if $x=y$; otherwise $e=0$.
It also applies to arrays, \ie $\share{\bm{E}}^B \leftarrow {\prod}_{\rm EQ}(\share{\bm{X}}^A, \share{\bm{Y}}^A)$. $\bm{E}[i]=1$ if $\bm{X}[i]=\bm{Y}[i]$, otherwise $\bm{E}[i]=0$, where $i \in [0, |\bm{X}|-1]$. 
Notably, one of the inputs can be public (\eg $y$ or $\bm{Y}$).

\normalem
\begin{algorithm}[htbp]
\footnotesize
\DontPrintSemicolon 
\caption{Oblivious Array Access, ${\prod}_{\rm OAA}$}
\label{alg2}
\SetKwInOut{Input}{Input}\SetKwInOut{Output}{Output}

\Input{$\share{\bm{W}}^A$ and $\share{\bm{U}}^A$ (stores the target indices)}
\Output{All parties learn the shares of target elements $\share{\bm{Z}}^A=\{\share{\bm{W}[u]}^A\}_{\forall u \in \bm{U}}$}

Parties initialize an array $\share{\bm{O}}^A$ with $|\bm{W}|$ zero values\; \label{alg2:line3}
\For{each $\share{u}^A \in \share{\bm{U}}^A$}{\label{alg2:line1}

    $\share{\bm{E}}^B \leftarrow {\prod}_{\rm EQ}(\{\share{u}^A\}^{|\bm{W}|}, \{0,{\cdots},|\bm{W}|-1\})$\; \label{alg2:line2}
    $\share{\bm{O}}^A \leftarrow \mathtt{SelectShare}(\share{\bm{O}}^A,\share{\bm{W}}^A,\share{\bm{E}}^B)$\; \label{alg2:line4}
    Compute $\share{z_u}^A = {\sum}_{k=0}^{|\bm{W}|-1} \share{\bm{O}[k]}^A$\; \label{alg2:line5}
}
Output $\share{\bm{Z}}^A = \{\share{z_u}^A\}_{\forall u{\in}\bm{U}}$\; \label{alg2:line6}
\end{algorithm}

\noindent{\textbf{Oblivious Array Access:}} 
Algorithm~\ref{alg2} illustrates our ${\prod}_{\rm OAA}$. 
Given the shares of the array $\bm{W}$ and the shares of a set of target indices $\bm{U}$, it outputs the shares of all the target elements. 
When $|\bm{U}| >1$, the target elements are accessed independently. 
Here we introduce how the three parties access one target element obliviously. 
Assume the target index is $\share{u}^A$, parties first compare it with each index of $\bm{W}$ using ${\prod}_{\rm EQ}$ and get Boolean shares of the $|\bm{W}|$ comparison results: $\share{\bm{E}}^B$ (Line~\ref{alg2:line2}).
According to ${\prod}_{\rm EQ}$, only $\bm{E}[u]=1$, whereas that is hidden from the three parties as they only have the shares of $\bm{E}$. 
Second, the three parties use $\mathtt{SelectShare}$~\footnote{$\share{\bm{W_3}}^A\leftarrow \mathtt{SelectShare}(\share{\bm{W_1}}^A, \share{\bm{W_2}}^A, \share{\bm{I}}^B)$ is a function which select shares from either array $\share{\bm{W_1}}^A$ or $\share{\bm{W_2}}^A$ based on $\share{\bm{I}}^B$. Specifically, $\bm{W_3}[i]=\bm{W_1}[i]$ when $\bm{I}[i]=0$, and $\bm{W_3}[i]=\bm{W_2}[i]$ when $\bm{I}[i]=1$ for all $i \in [0, |\bm{I}|-1]$.} to obliviously select elements from two arrays $\bm{W}$ and $\bm{O}$ based on $\share{\bm{E}}^B$(Line~\ref{alg2:line4}), where $\bm{O}$ is an assistant array and contains $|\bm{W}|$ zeros. 
Since only $\bm{E}[u]=1$, after executing $\mathtt{SelectShare}$, we have $\bm{O}[u]=\bm{W}[u]$ and all the other elements of ${\bm{O}}$ are still 0. 
Finally, each party locally adds all the elements in $\share{\bm{O}}^A$ and gets  $\share{z_u}^A$, where $z_u=\bm{W}[u]$ (Line~\ref{alg2:line5}).   
All the for-loops (and in the following protocols) can be parallelized since array values are handled independently. 

To fully leverage GPUs, \sys mainly exploits the parallelism of protocols. The communication overhead is the performance bottleneck of \sys. 
Precisely, given $N$ inputs, ${\prod}_{\rm EQ}$ totally incurs $N{\cdot}(2l-1)$ bits communication overhead in ${\log}l+1$ rounds. 
The communication overhead of ${\prod}_{\rm OAA}$ is $N{\cdot}(4l-1)$ bits in ${\log}l+3$ rounds. 
We can further improve this by using more communication-efficient primitives, \eg ORAM~\cite{doerner2017scaling}. We will explore that in future work.

\subsection{Oblivious DT Training}
We summarise 3 steps for DT training in Section \ref{DT_steps}. Here we present how \sys constructs oblivious protocols for them.
\normalem
\begin{algorithm}[!ht]
 \footnotesize
\DontPrintSemicolon 
\caption{Oblivious Learning, ${\prod}_{\rm OL}$}
\label{alg3}
\SetKwInOut{Input}{Input}\SetKwInOut{Output}{Output}

\Input{$\share{\bm{D}}^A$, $\share{\bm{T}}^A$, $\share{\bm{F}}^A$, $\share{\bm{M}}^A$, $\{\share{\bm{C_n}}^A\}_{n{\in}[0,n_h-1]}$ and current tree level $h$, where $n_h=2^h$ is the number of leaves at level $h$}
\Output{Updated $\{\share{\bm{C_n}}^A\}_{n{\in}[0,n_h-1]}$ }

\If{$h \neq 0$}{ \label{alg3:line1}
    $\share{\bm{Tval}}^A \leftarrow {\prod}_{\rm OAA}(\share{\bm{T}}^A, \share{\bm{M}}^A)$\; \label{alg3:line2}
    \For{$i=\{0,\cdots,N_D-1\}$} { \label{alg3:line3}
        $\share{\bm{Dval}[i]}^A \leftarrow  {\prod}_{\rm OAA}(\share{\bm{D}[i]}^A, \share{\bm{Tval}}^A)$\; \label{alg3:line4}
    }
   Compute $\share{\bm{M}}^A = 2 \share{\bm{M}}^A + \share{\bm{Dval}}^A + 1$\; \label{alg3:line5}
}

Initialize $\bm{LCidx} \leftarrow \{0,{\cdots},n_h-1\}$\; \label{alg3:line6}
$\share{\bm{isLeaf}}^B \leftarrow {\prod}_{\rm EQ}(\{\share{\bm{F}[2^h-1]}^A,{\cdots},\share{\bm{F}[n_h+2^h-1]}^A\},\{1\}^{n_h})$\; \label{alg3:line7} 

\For{$i=\{0,\cdots,N_D-1\}$} { \label{alg3:line8}
    $\share{\bm{LCF}}^B \leftarrow {\prod}_{\rm EQ}(\{\share{\bm{M}[i]}^A-(2^h-1)\}^{n_h}, \bm{LCidx}$)\; \label{alg3:line9}
    $\share{\bm{LCF}}^B~{\wedge}=~\share{\bm{isLeaf}}^B$\;\label{alg3:line10}
    Initialize $n_h$ arrays: $\share{\bm{C'}_{D_i}}^A \leftarrow \{\share{\bm{C'_{\rho}}}^A\}_{{\rho}{\in}[0,n_h-1]}$\;\label{alg3:line11}
    \For{$\rho=\{0,\cdots,n_h-1\}$} { \label{alg3:line12}
        \For{$k=\{0,\cdots,d-2\}$}{ \label{alg3:line13}
            $\share{\bm{C'_{\rho}}[0][2k]}^A = (1-\share{\bm{D}[i][k]}^A)$\; \label{alg3:line14}
            $\share{\bm{C'_{\rho}}[0][2k + 1]}^A =  \share{\bm{D}[i][k]}^A$\; \label{alg3:line15}
            $\share{\bm{C'_{\rho}}[1][2k]}^A = (1-\share{\bm{D}[i][k]}^A) {\cdot} (1-\share{\bm{D}[i][d-1]}^A)$\;\label{alg3:line16}
            $\share{\bm{C'_{\rho}}[1][2k+1]}^A = \share{\bm{D}[i][k]}^A {\cdot} (1-\share{\bm{D}[i][d-1]}^A)$\;\label{alg3:line17}
            $\share{\bm{C'_{\rho}}[2][2k]}^A =  (1-\share{\bm{D}[i][k]}^A) {\cdot} \share{\bm{D}[i][d-1]}^A$\;\label{alg3:line18}
            $\share{\bm{C'_{\rho}}[2][2k+1]}^A = \share{\bm{D}[i][k]}^A {\cdot} \share{\bm{D}[i][d-1]}^A$\; \label{alg3:line19}
        }
    }
    $\share{\bm{C'}_{D_i}}^A {\leftarrow} \mathtt{SelectShare}(\{\share{0}^A\}^{n_h},\share{\bm{C'}_{D_i}}^A,\share{\bm{LCF}}^B)$\; \label{alg3:line20}
}
Accumulate all $\{\share{\bm{C'}_{D_i}}^A\}_{i \in [0,N_D-1]}$ to $\{\share{\bm{C_n}}^A\}_{n{\in}[0,n_h-1]}$.\; \label{alg3:line21}
\end{algorithm}

\subsubsection{\textbf{Oblivious Learning: Data Partition}}
\label{sec:protocol:ol_dp}
Algorithm~\ref{alg3} illustrates details of the learning phase. 
The first step is to partition the data samples into new leaves (Line~\ref{alg3:line1}-\ref{alg3:line5}). 
We need to conceal data distribution, ensuring that the execution flow remains independent of the input data. 
Specifically, \sys uses an array $\bm{M}$ to indicate the leaf each data sample currently belongs to, where $\bm{M}[i]$ is the leaf identifier of $\bm{D}[i]$ and $|\bm{M}|=|\bm{D}|$, \eg  $\bm{M}[i]=n$ if $\bm{D}[i]$ is at $\bm{T}[n]$. 
Assume the new internal node is $\bm{T}[n]$ (which was a leaf with previously partitioned data samples) and the feature assigned to it is $s_j$. 
The main operation of data partition is to check if $\bm{D}[i][j]=0$ for each data sample at $\bm{T}[n]$. If yes, $\bm{D}[i]$ should be partitioned to the left child and $\bm{M}[i] \leftarrow 2\bm{M}[i]+1$; otherwise $\bm{M}[i] \leftarrow 2\bm{M}[i]+2$. That is, $\bm{M}[i] \leftarrow 2\bm{M}[i]+ \bm{D}[i][j] + 1$ (Line~\ref{alg3:line5}). 
To avoid leaking which leaf each data sample belongs to, \sys performs it obliviously. 

\sys processes the leaves at the same level in parallel.  
So for the data partition, we first get the features assigned to new internal nodes at level $h-1$ from $\bm{T}$, the identifiers of which are stored in $\bm{M}$ as they previously were leaves. Parties thus run ${\prod}_{\rm OAA}(\share{\bm{T}}^A, \share{\bm{M}}^A)$ to get all such features (Line~\ref{alg3:line2}). 
The second step is to get the corresponding feature values from each data sample (Line~\ref{alg3:line4}), and then assign $\bm{M}$ with the new leaf identifiers based on the fetched feature values (Line~\ref{alg3:line5}). 
Notably, the values of $\bm{M}$ should be 0 if $h=0$.

\begin{figure}[!ht]
\centering
\includegraphics[width=0.98\linewidth]{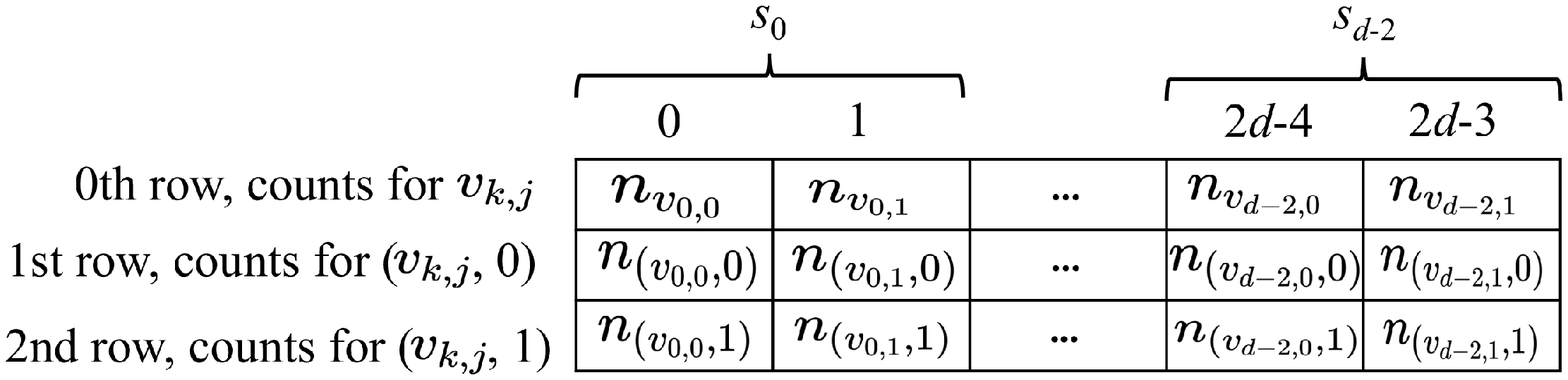}
\caption{Counter array in \sys, denoted as $\bm{C}$. \textnormal{The array contains 3 rows and $2(d-1)$ columns. $C[0][2k+j]$ in the first row stores the number of data samples containing $v_{k,j}$, where $k\in[0, d-2]$ and $j\in\{0, 1\}$. In the last two rows, $\bm{C}[1][2k+j]$ and $\bm{C}[2][2k+j]$ stores counts of $(v_{k,j}, 0)$ and $(v_{k,j}, 1)$ pairs, respectively. Such design also helps to take better advantage of GPU's parallelism.}}
\label{fig:other array}
\end{figure}

\subsubsection{\textbf{Oblivious Learning: Statistics Counting}}
\label{sec:protocol:ol_sc}
Statistics information of the data samples at each leaf is required for heuristic measurement. 
Specifically, for each real leaf, it requires the number of data samples containing each feature value and the number of data samples containing each (value, label) pair. 
\sys uses a 2D-array $\bm{C}$ to store such statistics for each leaf. 
As shown in Fig. \ref{fig:other array}, 
$\bm{C}$ has 3 rows and $2(d-1)$ columns (\ie the total number of feature values). 
Particularly, the $2k$-th and $(2k+1)$-th elements of each row are the statistics of feature $s_k$, where $0{\leq}k{\leq}(d-2)$. 
Each element in the first row $\bm{C}[0][2k+j]$ stores the number of data samples containing the corresponding feature value $v_{k, j}$, where $j{\in}\{0,1\}$. 
In the last two rows, $\bm{C}[1][2k+j]$ and $\bm{C}[2][2k+j]$ stores counts of $(v_{k,j}, 0)$ and $(v_{k,j}, 1)$ pairs, respectively. 

After assigning data samples to new leaves, \sys updates $\bm{C}$ of all leaves by scanning the partitioned data samples (Line~\ref{alg3:line6}-\ref{alg3:line21}). 
To hide the data samples' distribution over leaves (\ie the statistical information), \sys obliviously updates $\bm{C}$ of all leaves when processing each $\bm{D}[i]$. 
Indeed, only the real leaf that really contains $\bm{D}[i]$ should update its $\bm{C}$. 
To ensure the correctness of the final $\bm{C}$, a flag array $\bm{LCF}$ is used to indicate which $\bm{C}$ should be updated, and $\bm{LCF}$ contains two points: which leaf $\bm{D}[i]$ belongs to (Line~\ref{alg3:line9}), and if this leaf is real (Line~\ref{alg3:line7}, \ref{alg3:line10}). 
Whether each element of $\bm{C}$ increases 1 or not depends on the feature values of $\bm{D}[i]$ (see Line~\ref{alg3:line14}-\ref{alg3:line19}), and \sys stores such information in a temporary counter array $\bm{C'}_{\rho}$ and finally accumulates them in batch (Line~\ref{alg3:line21}). 
Note that there is no need to recompute the multiplications in Line~\ref{alg3:line17} and~\ref{alg3:line19} as they have been previously calculated. 

\subsubsection{\textbf{Oblivious Heuristic Computation}}
\label{sec:protocol:ohc}

We next need to convert each leaf into an internal node using the best feature. 
For each path, a feature can only be assigned once, and \sys uses an array $\bm{\gamma}$ of size $d-1$ to indicate if each feature has been assigned or not. Specifically, $\bm{\gamma}[i]=0$ if $s_i$ has been assigned; otherwise $\bm{\gamma}[i]=1$.

We adopt gini index due to its integer-arithmetic-friendly operations, \eg addition and multiplication. 
The Equation in Section~\ref{DT_steps} can be distilled into the following equation based on $\bm{C}$, where $m_k=\bm{C}[k+1][2i+v_{i,0}]+\bm{C}[k+1][2i+v_{i,1}]$:
\begin{equation}
\label{eq_gini_reduction}
    \begin{split}
         &\overline{G}(\bm{C},s_i) = {\sum}_{j=0}^{|\bm{V_{s_i}}|-1}\frac{P}{Q} \\
         &P = (\bm{C}[0][2i+v_{i,j}])^2-{\sum}_{k=0}^{|\bm{V}_{s_{d-1}}|-1}{m_k}^2 \\
         &Q = \bm{C}[0][2i+v_{i,j}]{\cdot}(\bm{C}[0][2i+v_{i,0}]+\bm{C}[0][2i+v_{i,1}])
    \end{split}
\end{equation}

We represent gini index as a rational number, where $P$ and $Q$ are the numerator and denominator, respectively.
We observe that minimizing the gini index is equivalent to minimizing $\frac{P}{Q}$.
To find the best feature, we need to select the feature with the smallest gini index value by comparing $\frac{P}{Q}$ of different features.
Na\"ively, we can avoid expensive division by comparing the fractions of two features directly~\cite{hoogh2014practical}: given non-negative $a,b,c,d$, we have that $\frac{a}{b}<\frac{c}{d}$ iff $a{\cdot}d<b{\cdot}c$.
Nevertheless, such a solution imposes the restrictions on the modulus $2^l$ and the dataset size. 
Specifically, given all features are binary, the numerator and denominator are upper bounded by $N_D^3/8$ and $N_D^2/4$, respectively.
The modulus must be at least $5{\cdot}(\log N_D - 1)$ bits long.
Therefore, $N_D$ can be at most $2^{13}=8,192$ for the modulus $2^{64}$
, which means we can process at most $8,192$ samples.
Moreover, a larger modulus results in performance degradation.

To our knowledge, this restriction is still an open obstacle to secret-sharing-based secure DT training.
Recently, Abspoel \etal~\cite{abspoel2020secure} use the random forest instead of DT to bypass this limitation.
To avoid the limitation of dataset size, we provide an MPC-only solution, namely \textit{MPC-based HC}: ${\prod}_{\rm OHC}^{mpc}$.

\normalem
\begin{algorithm}[!ht]
\footnotesize
\DontPrintSemicolon 
\caption{Oblivious Heuristic Computation via MPC, ${\prod}_{\rm OHC}^{mpc}$ }
\label{alg_mpc}
\SetKwInOut{Input}{Input}\SetKwInOut{Output}{Output}

\Input{$\{\share{\bm{C_n}}^A\}_{n{\in}[0,n_h-1]}$, $\{\share{\bm{\gamma_n}}^A\}_{n{\in}[0,n_h-1]}$, $\share{\bm{F}}^A$ and current tree level $h$}
\Output{Split decisions array $\share{\bm{SD}}^A$, updated $\{\share{\bm{\gamma_n}}^A\}_{n{\in}[0,n_h-1]}$, $\share{\bm{F}}^A$}

\For{$\rho=\{0,\cdots,n_h-1\}$}{ \label{alg_mpc:line1}
    \For{$i=\{0,\cdots,d-2\}$}{ \label{alg_mpc:line2}
        Compute $\share{\bm{P}[i]}^A$ and $\share{\bm{Q}[i]}^A$ using $\share{\bm{C_t}}^A$ according to Equation~\ref{eq_gini_reduction} \; \label{alg_mpc:line3}
        $\share{\bm{gini}[i]}^A \leftarrow \mathtt{Division}(\share{\bm{P}[i]}^A,\share{\bm{Q}[i]}^A)$ \; \label{alg_mpc:line4}
    }
    $\share{\bm{gini}}^A \leftarrow \mathtt{SelectShare}(\share{\bm{O}}^A,\share{\bm{gini}}^A,{\prod}_{\rm EQ}(\share{\bm{\gamma_{\rho}}}^A, \{1\}^{d-1}))$ \; \label{alg_mpc:line5}
    \nonl \textcolor{gray}{// return the indices of the best features}\;
    $\share{\bm{SD}[\rho]}^A \leftarrow \mathtt{Maxpool}(\share{\bm{gini}}^A)$ \; \label{alg_mpc:line6}
    \nonl \textcolor{gray}{// selectively update $\bm{\gamma_{\rho}}$, $\bm{F}$. $\bm{O}$ is initialized with zeros}\;
    $\share{\bm{\gamma_{\rho}}}^A \leftarrow \mathtt{SelectShare}(\share{\bm{\gamma_{\rho}}}^A,\share{\bm{O}}^A,{\prod}_{\rm EQ}(\share{\bm{\gamma_{\rho}}}^A, \{\share{\bm{SD}[\rho]}^A\}^{d-1}))$ \; \label{alg_mpc:line7}
    $\share{\bm{F}[2\rho+2^{h+1}]}^A,\share{\bm{F}[2\rho+2^{h+1}-1]}^A \leftarrow \mathtt{SelectShare}(\share{2}^A, \share{1}^A, {\prod}_{\rm EQ}(\share{\bm{F}[\rho+2^h-1]}^A, 2))$ \; \label{alg_mpc:line8}
    $\share{\bm{F}[\rho+2^h-1]}^A \leftarrow \mathtt{SelectShare}(\share{0}^A, \share{1}^A, {\prod}_{\rm EQ}(\share{{\Psi}_{r_{0}}}^A, 0)~{\wedge}$ $~{\prod}_{\rm EQ}(\share{{\Psi}_{r_{1}}}^A, 0)~{\wedge}~{\prod}_{\rm EQ}(\share{\bm{F}[\rho+2^h-1]}^A, 1))$ \; \label{alg_mpc:line10}
    
}
\end{algorithm}
\noindent\textbf{MPC-based HC.}
${\prod}_{\rm OHC}^{mpc}$ uses the relatively expensive MPC protocol $\mathtt{Division}$ to compute $\frac{P}{Q}$, and then uses $\mathtt{Maxpool}$ to compare the division results and select the best feature with the smallest gini index.
We use the $\mathtt{Division}$ and $\mathtt{Maxpool}$ proposed in Falcon~\cite{wagh2020falcon}.

The Gini index value is typically a floating-point value, while our MPC protocols operate over discrete domains like rings and finite fields. 
To address this, we use a fixed-point encoding strategy~\cite{tan2021cryptgpu,ng2021gforce,wagh2020falcon}. 
In details, a real value $x{\in}\RR$ is converted into an integer ${\lfloor}x{\cdot}2^{\tau}{\rceil}$ (\ie the nearest integer to $x{\cdot}2^{\tau}$) with $\tau$ bits of precision. 
The value of $\tau$ affects both the performance and accuracy. A smaller fixed-point precision enables the shares over a $32$-bit ring instead of $64$-bit, thus reducing communication and computation costs. 
However, lower precision may cause Gini index values to appear identical, potentially leading to incorrect feature selection.
Therefore, selecting the proper $\tau$ requires careful experimentation.

${\prod}_{\rm OHC}^{mpc}$ is given in Algorithm~\ref{alg_mpc}. The main steps are division (Line~\ref{alg_mpc:line4}) and selecting the best feature (Line~\ref{alg_mpc:line5}-\ref{alg_mpc:line7}).

As shown in CryptGPU~\cite{tan2021cryptgpu}, $\mathtt{Division}$ is GPU-unfriendly. 
How to make $\mathtt{Division}$ GPU-friendly and avoid division in DT training remain open challenges. 
To enhance performance, we alternatively propose to outsource the HC phase to TEEs, \ie TEE-based HC: ${\prod}_{\rm OHC}^{tee}$. 
We discuss the trade-offs between ${\prod}_{\rm OHC}^{mpc}$ and ${\prod}_{\rm OHC}^{tee}$ in Section~\ref{sec:exp training}. 

\normalem
\begin{algorithm}[!ht]
\footnotesize
\DontPrintSemicolon 
\caption{Oblivious Heuristic Computation via TEE, ${\prod}_{\rm OHC}^{tee}$}
\label{alg4}
\SetKwInOut{Input}{Input}\SetKwInOut{Output}{Output}

\Input{$\{\share{\bm{C_n}}^A\}_{n{\in}[0,n_h-1]}$, $\{\share{\bm{\gamma_n}}^A\}_{n{\in}[0,n_h-1]}$, $\share{\bm{F}}^A$, $h$}
\Output{Split decisions array $\share{\bm{SD}}^A$, updated $\{\share{\bm{\gamma_n}}^A\}_{n{\in}[0,n_h-1]}$, $\share{\bm{F}}^A$}

$\{\bm{C_n}\}_{n{\in}[0,n_h-1]}=\mathtt{Reconstruct}^A(\{\share{\bm{C_n}}^A\}_{n{\in}[0,n_h-1]})$\; \label{alg4:line1}
$\{\bm{\gamma_n}\}_{n{\in}[0,n_h-1]}=\mathtt{Reconstruct}^A(\{\share{\bm{\gamma_n}}^A\}_{n{\in}[0,n_h-1]})$\; \label{alg4:line2}
$\bm{F}=\mathtt{Reconstruct}^A(\share{\bm{F}}^A)$\; \label{alg4:line3}
\For{$\rho=\{0,\cdots,n_h-1\}$}{ \label{alg4:line4}
    \For{$i=\{0,\cdots,d-2\}$}{ \label{alg4:line5}
        $\bm{gini}[i] = \overline{G}(\bm{C}_{\rho}, s_i)$\; \label{alg4:line6}
    }
    \nonl \textcolor{gray}{// select the indices of the best features with $\mathtt{oless}$}\;
    $\bm{SD}[\rho] = \mathtt{oless}(\bm{gini}, \bm{\gamma_{\rho}})$ \; \label{alg4:line7}
    $\mathtt{oassign}(\bm{\gamma_{\rho}}, 0, \bm{SD}[\rho])$ \; \label{alg4:line8}
    $\bm{F}[2\rho+2^{h+1}], \bm{F}[2\rho+2^{h+1}-1] = \mathtt{oselect}(2, 1, \bm{F}[\rho+2^h-1]==2)$ \; \label{alg4:line9}
    $CheckLeaf$ indicates if $\rho$-th node contains only one label \; \label{alg4:line10}
    $\bm{F}[\rho+2^h-1] = \mathtt{oselect}(0, 1, (\neg CheckLeaf \wedge (\bm{F}[\rho+2^h-1]==1)))$ \; \label{alg4:line11}
} 
$\share{\bm{SD}}^A=\mathtt{Share}^A(\bm{SD})$ \; \label{alg4:line12}
$\share{\bm{F}}^A=\mathtt{Share}^A(\bm{F})$ \; \label{alg4:line13}
$\share{\{\bm{\gamma_n}\}_{n{\in}[0,n_h-1]}}^A=\mathtt{Share}^A(\{\bm{\gamma}_n\}_{n{\in}[0,n_h-1]})$ \; \label{alg4:line14}
TEE sends the respective shares of the outputs to the three GPUs \;\label{alg4:line15}
\end{algorithm}
\noindent\textbf{TEE-based HC.}
Algorithm~\ref{alg4} shows ${\prod}_{\rm OHC}^{tee}$.
Basically, parties send the shares of all current $\{\bm{C_n},\bm{\gamma_n}\}_{n{\in}[0,n_h-1]}$ and $\bm{F}$ to the co-located TEE enclave. 
The enclave reconstructs the inputs, generates/updates the outputs using similar steps as ${\prod}_{\rm OHC}^{mpc}$, and finally sends the respective shares to each CSP. 
To hide access patterns, the enclave uses the oblivious primitives (as used in~\cite{ohrimenko2016oblivious,law2020secure,poddar2020visor}): $\mathtt{oless}(\bm{vec},\bm{cond})$, return the index of the minimum value in $\bm{vec}$. $\bm{cond}$ indicates the compared elements; $\mathtt{oselect}(a, b, cond)$, select $a$ or $b$ based on condition $cond$; $\mathtt{oassign}(\bm{vec}, a, idx)$: assign the value of $\bm{vec}$ at index $idx$ with $a$.
These oblivious primitives operate on registers, making the instructions immune to memory-access-based pattern leakage once the operands are loaded into registers.

\normalem
\begin{algorithm}[htbp]
\footnotesize
\DontPrintSemicolon 
\caption{Oblivious Node Split, ${\prod}_{\rm ONS}$}
\label{alg5}
\SetKwInOut{Input}{Input}\SetKwInOut{Output}{Output}

\Input{$\share{\bm{SD}}^A$, $\share{\bm{F}}^A$, $\{\share{\bm{C_n}}^A\}_{n{\in}[0,n_h-1]}$ at level $h$}
\Output{Updated $\share{\bm{T}}^A$, new arrays $\{\share{\bm{C_n}}^A\}_{n{\in}[0,N_c-1]}$ at level $h+1$ where $N_c=2n_h$}

\For{$\rho=\{0,\cdots,n_h-1\}$} { \label{alg5:line1}
    $\share{isInternal}^B \leftarrow {\prod}_{\rm EQ}(\share{\bm{F}[\rho+2^h-1]}^A,0)$ \; \label{alg5:line2}
    \For{each $\share{\bm{C_i}}^A$ in $\share{\bm{C'_{\rho}}}^A=\{\share{\bm{C_i}}^A\}_{i{\in}[0,1]}\}$} { \label{alg5:line3}
        $\share{\bm{C_i}}^A {\leftarrow} \mathtt{SelectShare}(\share{\bm{C_{\rho}}}^A,\share{0}^A,\share{isInternal}^B)$\; \label{alg5:line4}
    }
}
Concatenate all $\{\share{\bm{C'_{\rho}}}^A\}_{\rho \in [0,n_h-1]}$ into $\{\share{\bm{C_n}}^A\}_{n{\in}[0,N_c-1]}$\; \label{alg5:line5}
Replace the values of $\share{\bm{T}}^A$ at level $h$ with $\share{\bm{SD}}^A$\; \label{alg5:line6}
\end{algorithm}

\subsubsection{\textbf{Oblivious Node Split}}
\label{sec:protocol:ons}
The next step is to convert each leaf into an internal node using the best feature and generate new leaves at the next level (Algorithm~\ref{alg5}).
Recall that \sys builds the DT in layer-wise. 
\sys updates $\bm{C}$ of all new leaves (Line~\ref{alg5:line1}-\ref{alg5:line4}) and $\bm{T}$ (Line~\ref{alg5:line6}) in parallel.
Array $\bm{C}$ of a new leaf inherits from its parent node if the parent node is a dummy or leaf node; otherwise, set all values to zeros (Line~\ref{alg5:line4}).
Notably, $\bm{\gamma}$ of each new leaf inherits from its parent node no matter which type the parent node is.

\subsubsection{\textbf{Oblivious Decision Tree Training}}
\label{sec:protocol:odtt}

Our training procedure is composed of the aforementioned protocols.
As a result of our level-wise tree construction, \sys always builds a full binary tree with depth $H$. 
$H$ could be: 
(1) a pre-defined public depth;
(2) a depth where all nodes at this level are leaf or dummy nodes;
(3) a depth which is the number of features $d$ (including the label).
Such design has a trade-off between privacy and performance, since we may add many dummy nodes as the tree grows, especially in the sparse DT.

\subsection{\textbf{Oblivious DT Inference}}
\label{sec:protocol:odti}

\normalem
\begin{algorithm}[!ht]
\footnotesize
\DontPrintSemicolon 
\caption{Oblivious DT Inference, ${\prod}_{\rm ODTI}$}
\label{alg8}
\SetKwInOut{Input}{Input}\SetKwInOut{Output}{Output}

\Input{Unlabelled input $\share{\bm{I}}^A$ of length $N_I$, $\share{\bm{T}}^A$, tree depth $H$}
\Output{Inference result $\share{\bm{R}}^A$ of length $N_I$}

Initialize $\share{\bm{Tidx}}^A$, $\share{\bm{Tval}}^A$, $\share{\bm{Dval}}^A$ of length $N_I$, $n_{iter}=0$ \; \label{alg8:line1}
\While{$n_{iter} < H$}{ \label{alg8:line2}
    $\share{\bm{Tval}}^A \leftarrow {\prod}_{\rm OAA}(\share{\bm{T}}^A, \share{\bm{Tidx}}^A)$ \; \label{alg8:line3}
    \For{$i = \{0,\cdots,N_I-1\}$}{ \label{alg8:line4}
        $\share{\bm{Dval}[i]}^A \leftarrow {\prod}_{\rm OAA}(\share{\bm{I}[i]}^A, \share{\bm{Tval}}^A)$ \; \label{alg8:line5}
    }
    Compute $\share{\bm{Tidx}}^A = 2\share{\bm{Tidx}}^A + \share{\bm{Dval}}^A + 1$ \; \label{alg8:line6}
    Compute $n_{iter}++$ \; \label{alg8:line7}
}
$\share{\bm{R}}^A=\share{\bm{Tval}}^A$ \;\label{alg8:line8}
\end{algorithm}

${\prod}_{\rm ODTI}$ protocol is shown in Algorithm~\ref{alg8}.
The main operation is to obliviously access $\bm{T}$ and $\bm{I}$ (Line~\ref{alg8:line3}-\ref{alg8:line5}). 
The values of $\bm{T}$ at the last level $H$ are the inference results (Line~\ref{alg8:line8}). 
Notably, due to the benefits of GPU-friendly property in \sys, we can process a large number of concurrent queries. 

\section{Performance Evaluation}
\label{sec:exp}

We build \sys on top of Piranha~\cite{watson2022piranha}, a platform for accelerating secret sharing-based MPC protocols using GPUs. 
Piranha provides some state-of-the-art secret-sharing MPC protocols that allow us to easily leverage GPU.
Particularly, some of these protocols such as $\mathtt{SelectShare}$, $\mathtt{Division}$ and $\mathtt{Maxpool}$~\cite{wagh2020falcon} are used in \sys.
Following with Piranha, our main focus is on evaluating \sys's performance in the data-dependent online phase, as offline generation of data-independent components, such as edaBits~\cite{makri2021mathsf}, can be easily parallelized independently from a specific computation.

In this section, we evaluate the performance of \sys training and inference. 
We also compare the performance of \sys with previous CPU-based solutions and two baselines.

\subsection{Experiment Setup}
\noindent\textbf{Testbed.}
We evaluate the prototype on a server equipped with 3 NVIDIA Tesla V100 GPUs, each of which has 32 GB of GPU memory. 
The server runs Ubuntu 20.04 with kernel version 5.4.0 and has 10-core Intel Xeon Silver 4210 CPUs (2.20GHz) and 125GB of memory.
We implement \sys in C++ and marshal GPU operations via CUDA 11.6. 
Since the GPU server does not support Intel SGX, we use a desktop with SGX support to test SGX-based protocol ${\prod}_{\rm OHC}^{tee}$. The desktop contains 8 Intel i9-9900 3.1GHZ cores and 32GB of memory (${\thicksim}93$ MB EPC memory) and runs OpenEnclave 0.16.0~\cite{microsoftOE}.

\noindent\textbf{Network Latency.}
As done in previous work~\cite{abspoel2020secure,adams2022privacy,hoogh2014practical}, we test all the protocols on the same machine. Piranha provides the functionality to emulate the network connection and measure network latency based on the communication overhead and the bandwidth. 
Our test uses a local area network (LAN) environment with a bandwidth of 10 Gbps and ping time of 0.07 ms (same setting as~\cite{abspoel2020secure,hoogh2014practical}) \textcolor{black}{to simulate three independent CSPs}.
Communication between GPUs is bridged via CPU in Piranha~\cite{watson2022piranha} and \sys inherits this property. 
The time for transferring data between GPU and CPU on the same machine is negligible in \sys.
All the following results are average over at least 10 runs.

\begin{table}[!ht]
	\centering
        \footnotesize
	\caption{Datasets}
	\label{tab:dataset}
        \resizebox{!}{0.7cm}{%
    	\begin{tabular}{ccccc}
    		\toprule
    		\textbf{Dataset} & \textbf{\#Features} & \textbf{\#Labels}  & \textbf{\#Samples}  & \textbf{Tree Depth} \\	\hline
    		SPECT     & $22$  & $2$  & 267    & 6  \\	
    		KRKPA7    & $35$  & $2$  & 3,196  & 9  \\
    		Adult     & $14$  & $2$  & 48,842 & 5  \\	\bottomrule
    	\end{tabular}
        }
\end{table}

\subsection{Performance of \sys Training}
\label{sec:exp training}

\begin{table}[htbp]
\centering
\caption{Performance of three steps (in seconds).}
\label{tab:time_step}
\resizebox{!}{3.5cm}{%
    \begin{tabular}{ccccccc}
    \toprule
    \multirow{2}{*}{} & \multirow{2}{*}{$\prod_{\text{OL}}$} & \multicolumn{2}{c}{$\prod_{\text{OHC}}$} & \multirow{2}{*}{$\prod_{\text{ONS}}$} & \multicolumn{2}{c}{Total} \\
    \cline{3-4} \cline{6-7}
     &  & \textbf{tee} & \textbf{mpc} &  & \textbf{tee} & \textbf{mpc} \\
    \midrule
    \multicolumn{7}{l}{(a) $(d,H)=(8, 6)$, vary data samples}  \\
    \midrule
    $1 \times 10^4$ & 1.63 & 0.01 & 1.35 & 0.07 & 1.71 & 3.05 \\
    $2 \times 10^4$ & 3.04 & 0.01 & 1.35 & 0.07 & 3.12 & 4.46 \\
    $3 \times 10^4$ & 4.45 & 0.01 & 1.36 & 0.07 & 4.53 & 5.88 \\
    $4 \times 10^4$ & 5.95 & 0.02 & 1.36 & 0.07 & 6.04 & 7.38 \\
    $5 \times 10^4$ & 7.42 & 0.02 & 1.38 & 0.07 & 7.51 & 8.87 \\
    \midrule
    \multicolumn{7}{l}{(b) $(N_D,H)=(50000, 5)$, vary features}  \\
    \midrule
    
    4  & 2.99  & 0.01 & 1.06 & 0.05 & 3.05 & 4.10 \\
    8  & 3.83  & 0.01 & 1.08 & 0.05 & 3.89 & 4.96 \\
    16 & 5.37  & 0.01 & 1.11 & 0.05 & 5.43 & 6.53 \\
    32 & 8.16  & 0.02 & 1.15 & 0.05 & 8.24 & 9.37 \\
    64 & 14.26 & 0.02 & 1.21 & 0.06 & 14.34 & 15.53 \\
    \midrule
    \multicolumn{7}{l}{(c) $(N_D,d)=(50000, 8)$, vary depth}  \\
    \midrule
    4 & 2.01  & 0.01 & 0.85 & 0.04 & 2.06 & 2.90 \\
    5 & 3.78  & 0.01 & 1.10 & 0.05 & 3.83 & 4.93 \\
    6 & 7.53  & 0.01 & 1.40 & 0.07 & 7.59 & 8.68 \\
    7 & 15.46 & 0.01 & 1.65 & 0.08 & 15.53 & 16.62 \\
    8 & 33.30 & 0.03 & 1.96 & 0.10 & 33.39 & 35.36 \\
    \bottomrule
    \end{tabular}
}
\end{table}

We first evaluate the performance of the protocols used in DT training:  ${\prod}_{\rm OL}$, ${\prod}_{\rm OHC}^{mpc}$, ${\prod}_{\rm OHC}^{tee}$, and ${\prod}_{\rm ONS}$.
For this test, we use a synthetic dataset which allows us to flexibly change the number of samples, features, and depths so as to better show the performance of \sys under different conditions.
The results are shown in Table.~\ref{tab:time_step}.   

Table.~\ref{tab:time_step} shows that the learning phase is the most expensive step. For training $5{\times}10^4$ data samples with depth 8, \sys spends 33.24 seconds, taking up about 98\% of the overall training runtime on average.
The main reason is that this step uses ${\prod}_{\rm OAA}$ and $\mathtt{SelectShare}$ with a large input size (\,  e.g. $N_D$) multiple times, incurring high communication costs.

\noindent\textbf{Heuristic Computation: ${\prod}_{\rm OHC}^{mpc}$ vs. ${\prod}_{\rm OHC}^{tee}$.} For the heuristic computation phase, as discussed in Section~\ref{sec:protocol:ohc}, in ${\prod}_{\rm OHC}^{mpc}$,
the value of $\tau$ affects both the performance of ${\prod}_{\rm OHC}^{mpc}$ and the model accuracy. 
From our experiment results, we observe that when setting $\tau=10$ and working over the ring $\ZZ_{2^{32}}$, the model accuracy trained using ${\prod}_{\rm OHC}^{mpc}$ is acceptable. 
Thus, for evaluating ${\prod}_{\rm OHC}^{mpc}$, we set $\tau=10$.

For the accuracy of \sys when setting $\tau=10$, here we present the results tested with 3 UCI datasets that are widely used in the literature: SPECT Heart (SPECT) dataset, Chess (King-Rook vs. King-Pawn) (KRKPA7) dataset, Adult dataset (see details in Table~\ref{tab:dataset}). 
We split the dataset into $80\%$ for training and $20\%$ for inference. 
\textcolor{black}{We adjust various hyper-parameters (\eg depth and minimum samples per leaf~\cite{quinlan2014c4}) to obtain the best accuracy for comparisons\footnote{Note that techniques for improving accuracy such as DT pruning~\cite{quinlan2014c4} are not considered in our tests}.}
The DT model trained by \sys using ${\prod}_{\rm OHC}^{tee}$ achieves the same inference accuracy as the plaintext model.
However, the DT model trained with ${\prod}_{\rm OHC}^{mpc}$ experiences an accuracy drop of $1\%-4\%$.

The findings reveal that, even with sufficient precision, accuracy loss can still occur since the feature assigned to each node might not be the best one. This is primarily attributed to the following factors:
(i) With relatively large precision and datasets, the numerator and denominator, represented by $P$ and $Q$ in Equation~\ref{eq_gini_reduction}, become sizable. To prevent overflow, we use the scaling function from Piranha~\cite{watson2022piranha} to scale $P$ and $Q$, introducing potential accuracy loss.
(ii) The $\mathtt{Division}$ operation employs an approximation technique (\ie Taylor expansion~\cite{tan2021cryptgpu,wagh2020falcon,watson2022piranha}), which may introduce a certain degree of accuracy loss.
(iii) The Gini index values of different features can be very close, leading to incorrect selections when compared under MPC. 
Additionally, controlling precision proves challenging as it necessitates varying precisions at each node and remains unpredictable.
Note that in different datasets, this may not always result in accuracy drop, as features with similar Gini indices may explain the model well.

When $\tau=10$, Table~\ref{tab:time_step} shows that ${\prod}_{\rm OHC}^{tee}$ outperforms ${\prod}_{\rm OHC}^{mpc}$ by $136{\times}$.  
Compared with ${\prod}_{\rm OL}$, the time taken by ${\prod}_{\rm OHC}^{tee}$ is almost negligible. 
Therefore, for training $5\times 10^4$ data samples, training with ${\prod}_{\rm OHC}^{tee}$ is only ${\thicksim}1.3 \times$ faster.

\noindent\textbf{Remark.}
${\prod}_{\rm OHC}^{tee}$ surpasses ${\prod}_{\rm OHC}^{mpc}$ in both accuracy and performance. Nevertheless, employing ${\prod}_{\rm OHC}^{tee}$ also comes with certain limitations. For example, utilizing TEE such as Intel SGX necessitates dependence on Intel's security mechanisms and incurs the overhead of mitigating side-channel attacks, \eg access-pattern-based attacks (see the comparison of \sys and TEE-only baseline in Fig.~\ref{fig:sgx_only_gtree}). Overall, a trade-off exists between these two approaches. While ${\prod}_{\rm OHC}^{tee}$ offers better performance, ${\prod}_{\rm OHC}^{mpc}$ has the potential for future optimization through more advanced MPC protocols.

\begin{figure*}[!ht]
  \centering
    \begin{subfigure}[t]{0.6\columnwidth}
      \centering   
      \includegraphics[width=\linewidth]{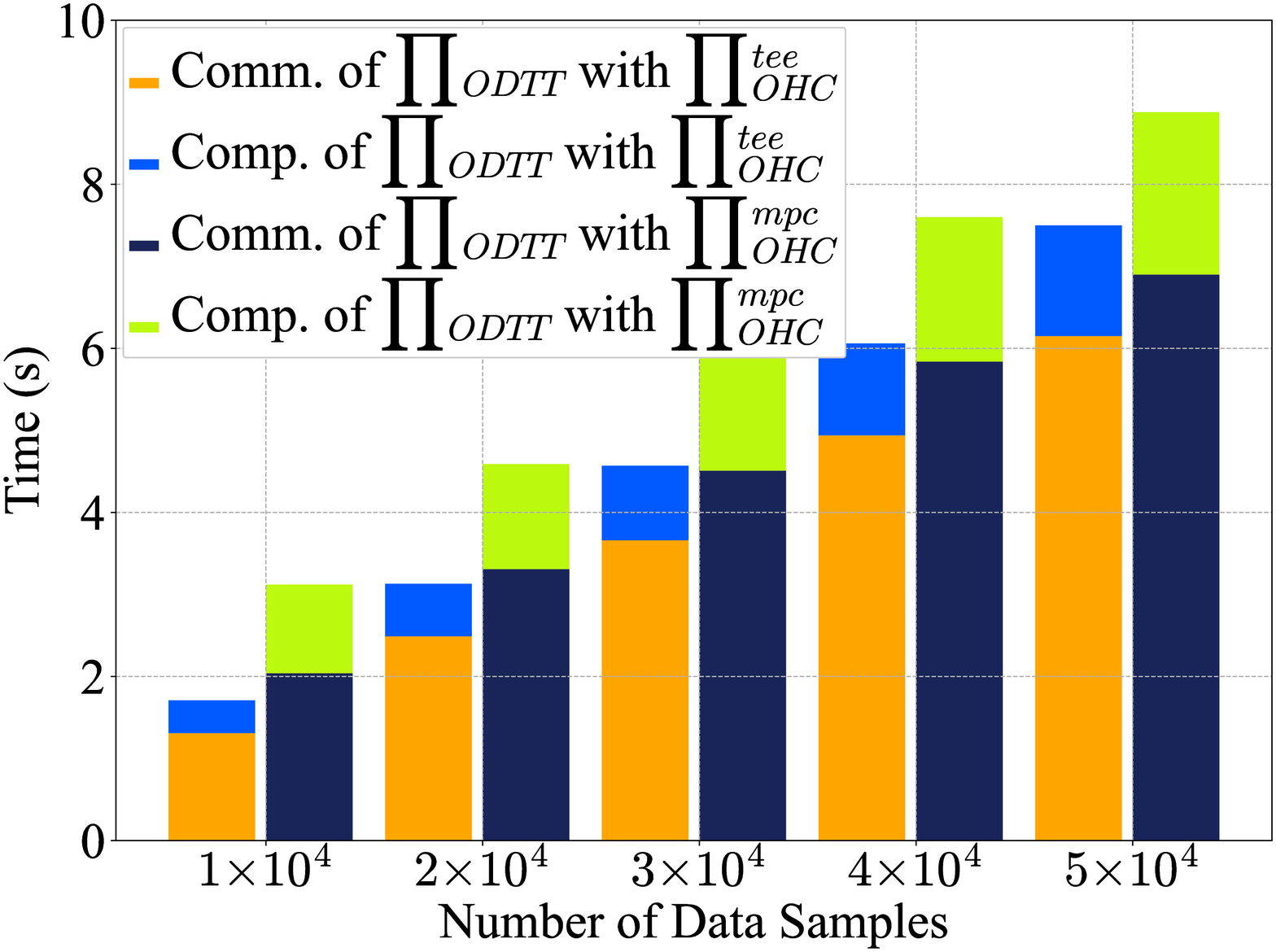}
        \caption{With $8$ features and 6 levels}
        \label{fig:time_comm_comp:sub1}
    \end{subfigure}   
    \centering
    \begin{subfigure}[t]{0.6\columnwidth}
      \centering   
      \includegraphics[width=\linewidth]{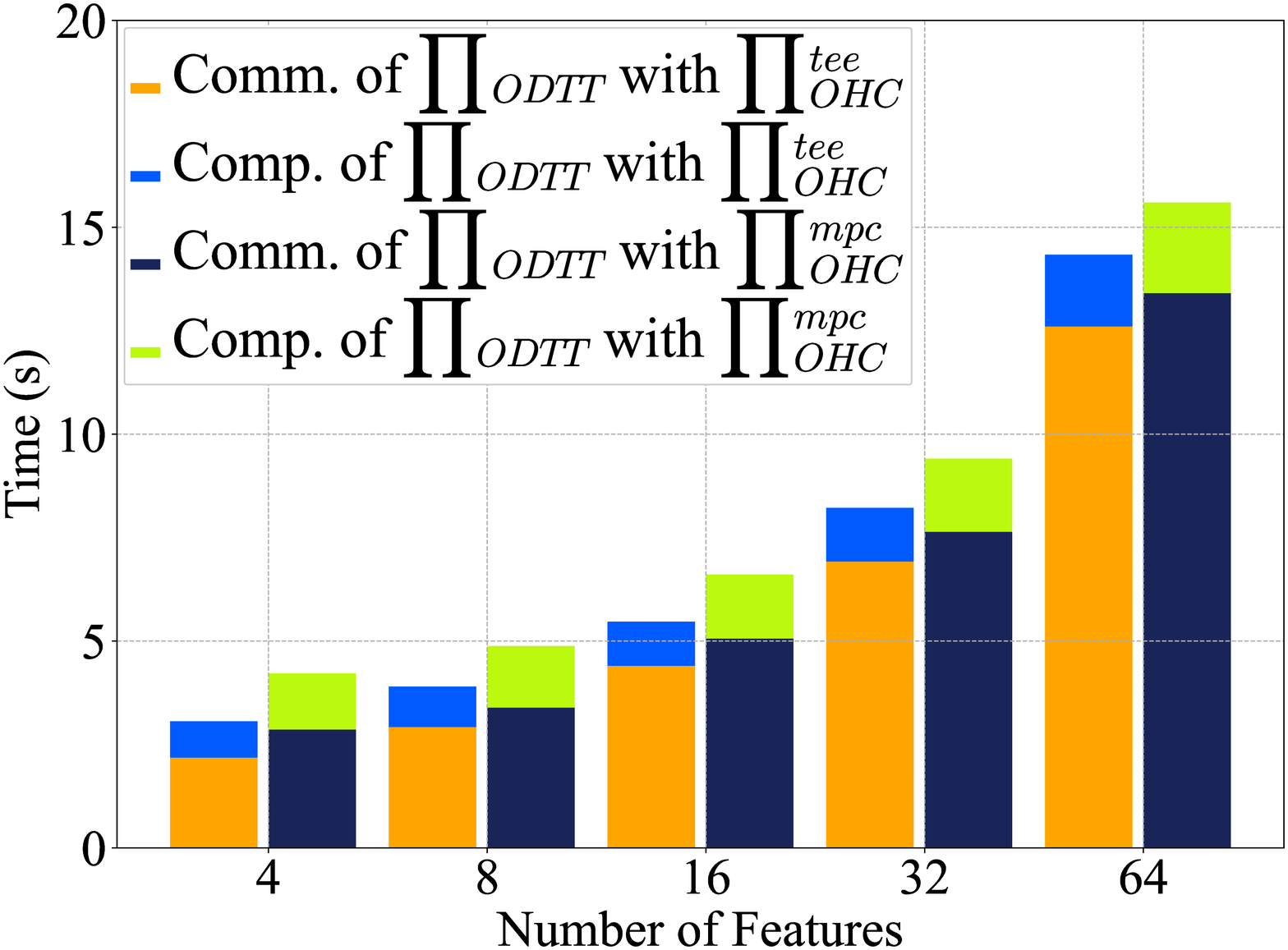}
        \caption{With $50,000$ samples and 5 levels}
        \label{fig:time_comm_comp:sub2}
    \end{subfigure}
    \begin{subfigure}[t]{0.6\columnwidth}
      \centering   
      \includegraphics[width=\linewidth]{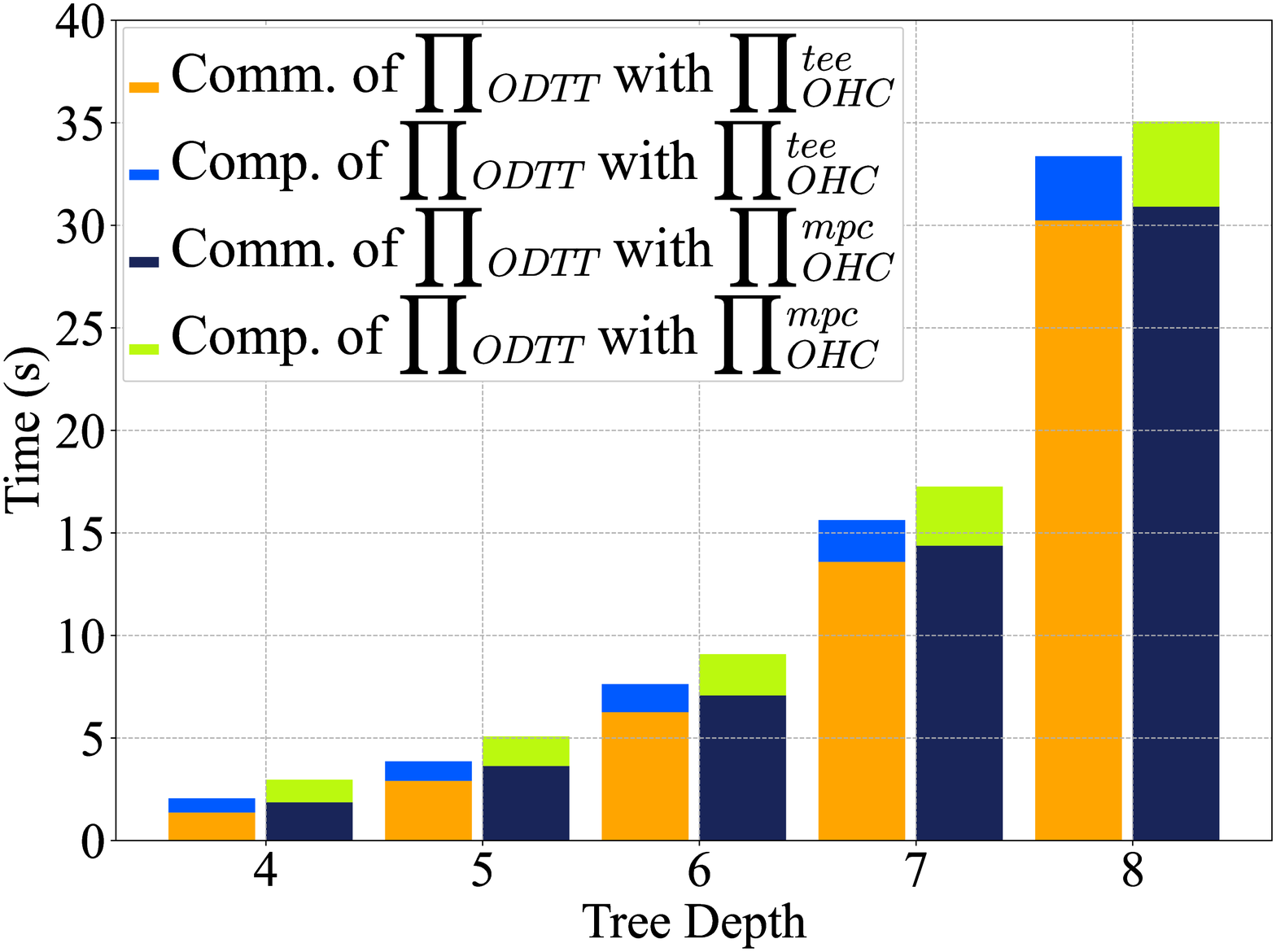}
        \caption{With $8$ features and $50,000$ samples}
        \label{fig:time_comm_comp:sub3}
    \end{subfigure}
\caption{
\label{fig:time_comm_comp}
Computation (Comp.) and Communication (Comm.) time of training. 
}
\end{figure*}

\begin{table}[!ht]
    \renewcommand\arraystretch{1.3}
    \centering
    \footnotesize
    \caption{Comm. cost (MB) for different depths (8 features)}
    \label{tab:comm_mb_depth}
    \resizebox{!}{1.4cm}{%
    \begin{tabular}{lccccc}
        \toprule
        \textbf{\#Depths} & \textbf{4} & \textbf{5} & \textbf{6} & \textbf{7} & \textbf{8} \\ \hline
        ${\prod}_{\rm OL}$                  &     460.1       &      936.6      &       1924.9      &     3998.2        &      8363.4       \\ 
        ${\prod}_{\rm OHC}^{tee}$                 &     0.004       &      0.008      &      0.017       &      0.034       &      0.070       \\ 
        ${\prod}_{\rm OHC}^{mpc}$                 &      0.258      &       0.553     &      1.14       &       2.32      &      4.68       \\ 
        ${\prod}_{\rm ONS}$                 &      0.009      &     0.020       &      0.041       &       0.084      &       0.170      \\ \hline
        \makecell{${\prod}_{\rm ODTT}$ with ${{\prod}_{\rm OHC}^{tee}}$}           &    460.1        &      936.6      &      1925       &      3998.4       &   8363.6          \\ 
        \makecell{${\prod}_{\rm ODTT}$ with ${{\prod}_{\rm OHC}^{mpc}}$}           &      460.4      &      937.1      &      1926.1       &      4000.6       &  8368.3  \\
        \bottomrule
    \end{tabular}
    }
\end{table}

\noindent\textbf{Communication Overhead.}
We measure the communication overhead during training in \sys with different tree depths ($5\times 10^4$ data samples) in Table~\ref{tab:comm_mb_depth} (overhead varies more with tree depth). 
Not surprisingly, the learning phase incurs the highest communication cost due to the large volume of data processed. 
Additionally, ${\prod}_{\rm OHC}^{tee}$ is more communication-efficient than ${\prod}_{\rm OHC}^{mpc}$. 

In Fig.~\ref{fig:time_comm_comp}, we split the training time into the time spent on communicating and computing. 
With our LAN setting, about $69\%-90\%$ (also varies with the network bandwidth) overhead of \sys training is the communication, which is similar to CryptGPU~\cite{tan2021cryptgpu} and Piranha~\cite{watson2022piranha}. We will optimize the communication overhead in our future work (see Section~\ref{sec:conclusion}). 
It is worth noting that even with such communication overhead, \sys is still highly more efficient than CPU-based and TEE-based solutions (see Section~\ref{subsec:comp}).

\subsection{Performance of \sys Inference}
\label{sec:inf_bench}

\begin{figure}[!ht]
  \centering
    \begin{subfigure}{0.49\columnwidth}
      \centering   
      \includegraphics[width=\textwidth]{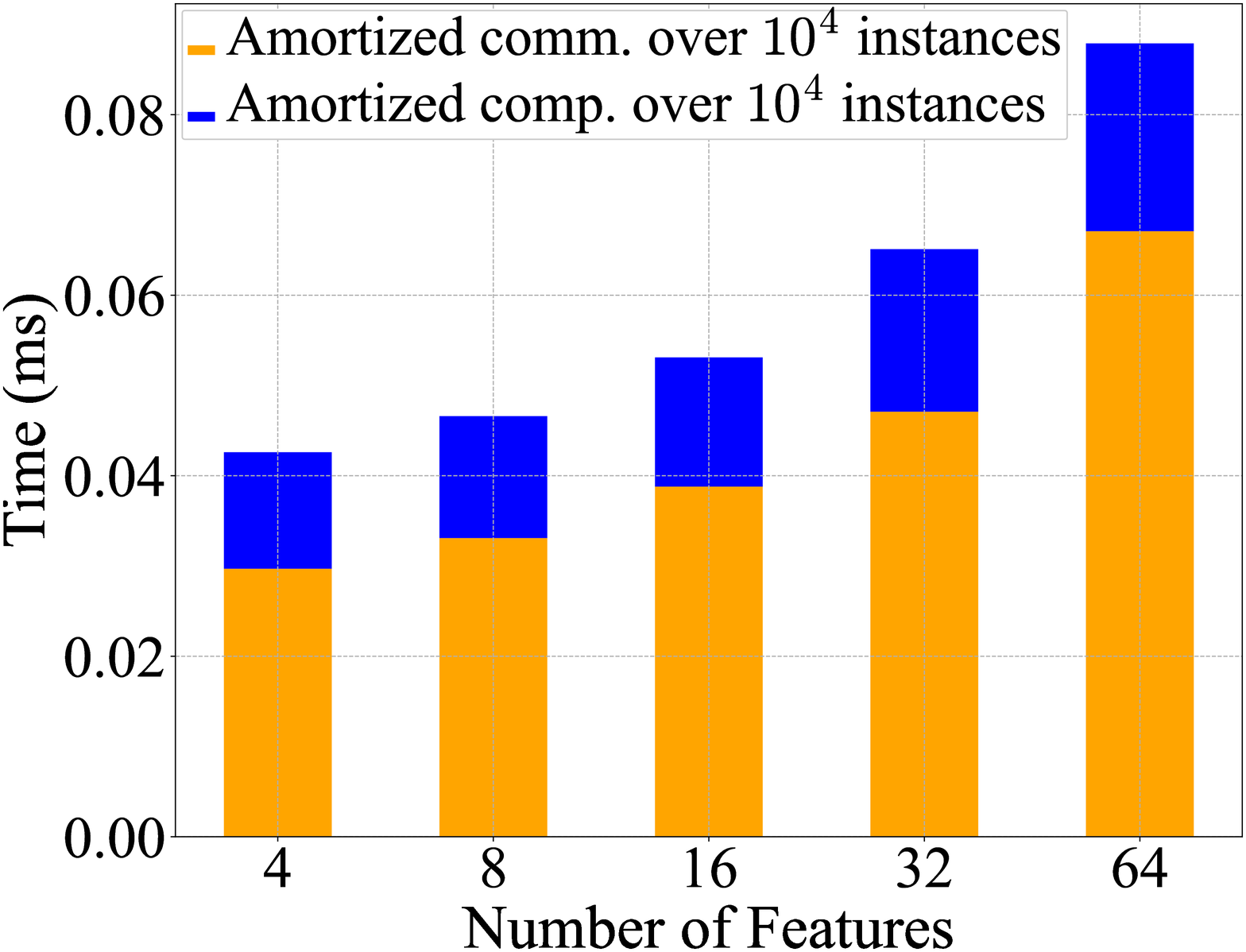}
        \caption{With 5 tree depth}
        \label{fig:inf_comm_comp:sub1}
    \end{subfigure}   
    \begin{subfigure}{0.49\columnwidth}
      \centering   
      \includegraphics[width=\textwidth]{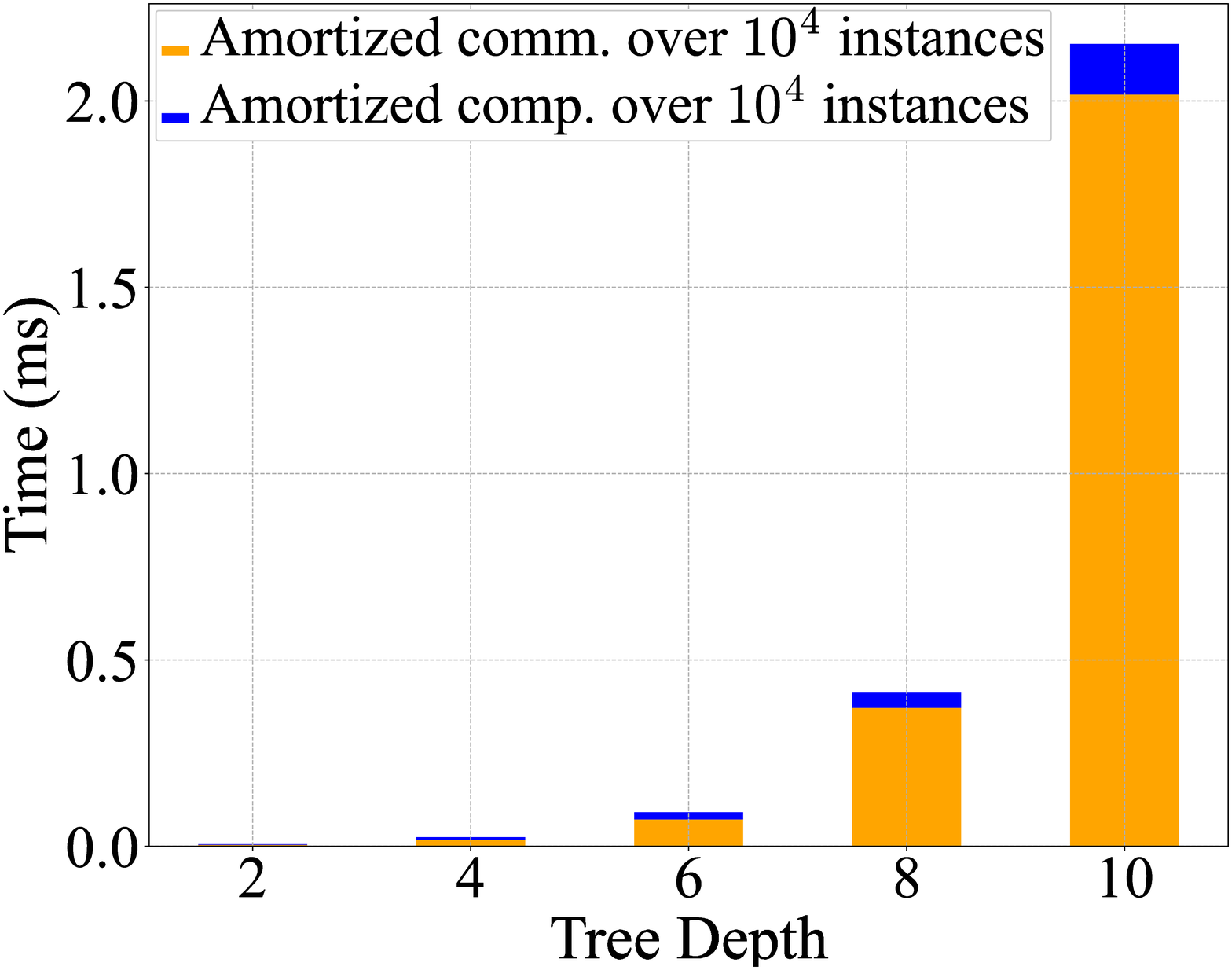}
        \caption{With 10 features}
        \label{fig:inf_comm_comp:sub2}
    \end{subfigure}
\caption{
\label{fig:inf_comm_comp}
Comp. and comm. time of inference. \textnormal{The results are amortized over $10^4$ instances, \ie infer $10^4$ inferences at once.}  
}
\end{figure}

To test inference, we input $10^4$ instances and measure the amortized time, which is broken down into computation and communication times in Fig.~\ref{fig:inf_comm_comp}. 
Note that Fig.~\ref{fig:inf_comm_comp} only illustrates the costs among CSPs, excluding the latency between QU and CSP. 
The main overhead of inference is also the communication, which takes about $69\%-90\%$. 

Comparing Fig.~\ref{fig:inf_comm_comp:sub1} and Fig.~\ref{fig:inf_comm_comp:sub2}, we can see tree depth has a greater impact on inference performance than the number of features, and when the tree depth $\geq 10$, \sys's inference time increases sharply. 
Algorithm~\ref{alg8} shows that DT inference primarily relies on multiple invocations of ${\prod}_{\rm OAA}$, which depends on the tree depth $H$ ($N_I$ inputs are inferred in parallel). 
Thus, for inference, \sys is better suited for trees with small or medium depths. 
For deeper trees, using random forests may be more effective than a single DT.

\subsection{Performance Comparisons with Others}
\label{subsec:comp}
\begin{table}[!ht]
	\centering
	\footnotesize
	\caption{Time of training with UCI datasets (in seconds)}
	\label{tab:performance_real}
        \resizebox{!}{0.6cm}{%
	\begin{tabular}{cccc}
		\toprule
		\textbf{Scheme} & \textbf{SPECT}  & \textbf{KRKPA7}  & \textbf{Adult} \\
		\hline
		$\mathtt{SID3T}$ & 3.55 & 6.45 (not secure) & 89.07  \\
		\textbf{\sys} & 0.31 & 8.63 & 4.32 \\
		\bottomrule
	\end{tabular}
 }
\end{table}
\noindent\textbf{Comparisons with Prior Work.}
We also compare the performance of \sys against Hoogh \etal~\cite{hoogh2014practical} and Liu \etal~\cite{liu2020towards}.
To the best of our knowledge, these are the only private DT works that demonstrated the ability to train the dataset with categorical features in a relatively efficient manner.
Several recent works~\cite{abspoel2020secure,adams2022privacy,hamada2021efficient} focus on designing specified MPC protocols (\eg sorting) to process continuous features.
\textcolor{black}{We recognize that both \sys and this line of research can handle both types of features by integrating additional MPC-based data-processing protocols (\eg MPC-based discretization~\cite{adams2022privacy}). In this experiment, we primarily compare our approach with schemes that also concentrate on categorical features.}

\begin{figure*}[!ht]
  \centering
    \begin{subfigure}[t]{0.65\columnwidth}
      \centering   
      \includegraphics[width=\linewidth]{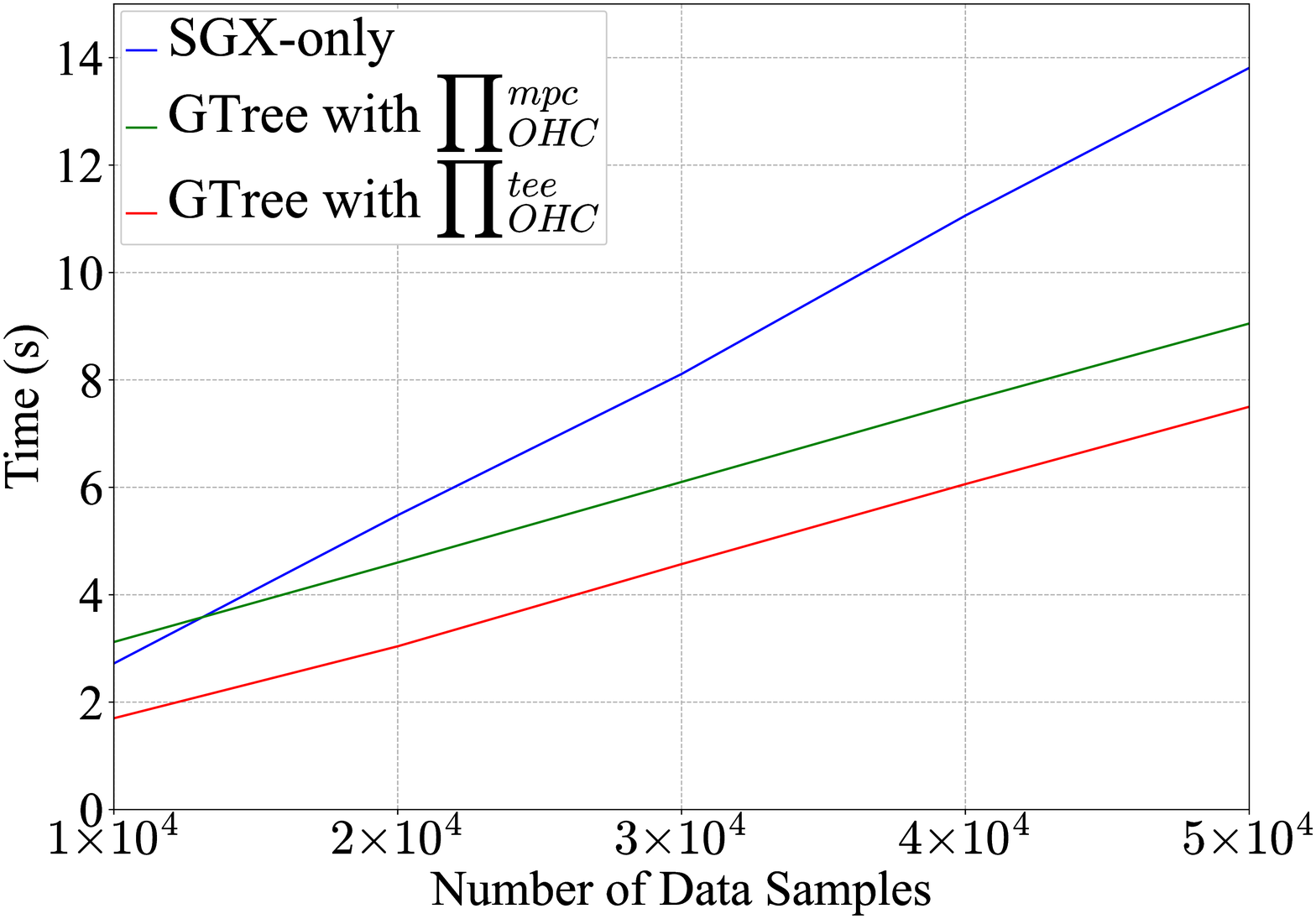}
        \caption{\small Training, 8 features and 6 levels}
        \label{fig:sgx_only_gtree:sub1}
    \end{subfigure}   
    \centering
    \begin{subfigure}[t]{0.65\columnwidth}
      \centering   
      \includegraphics[width=\linewidth]{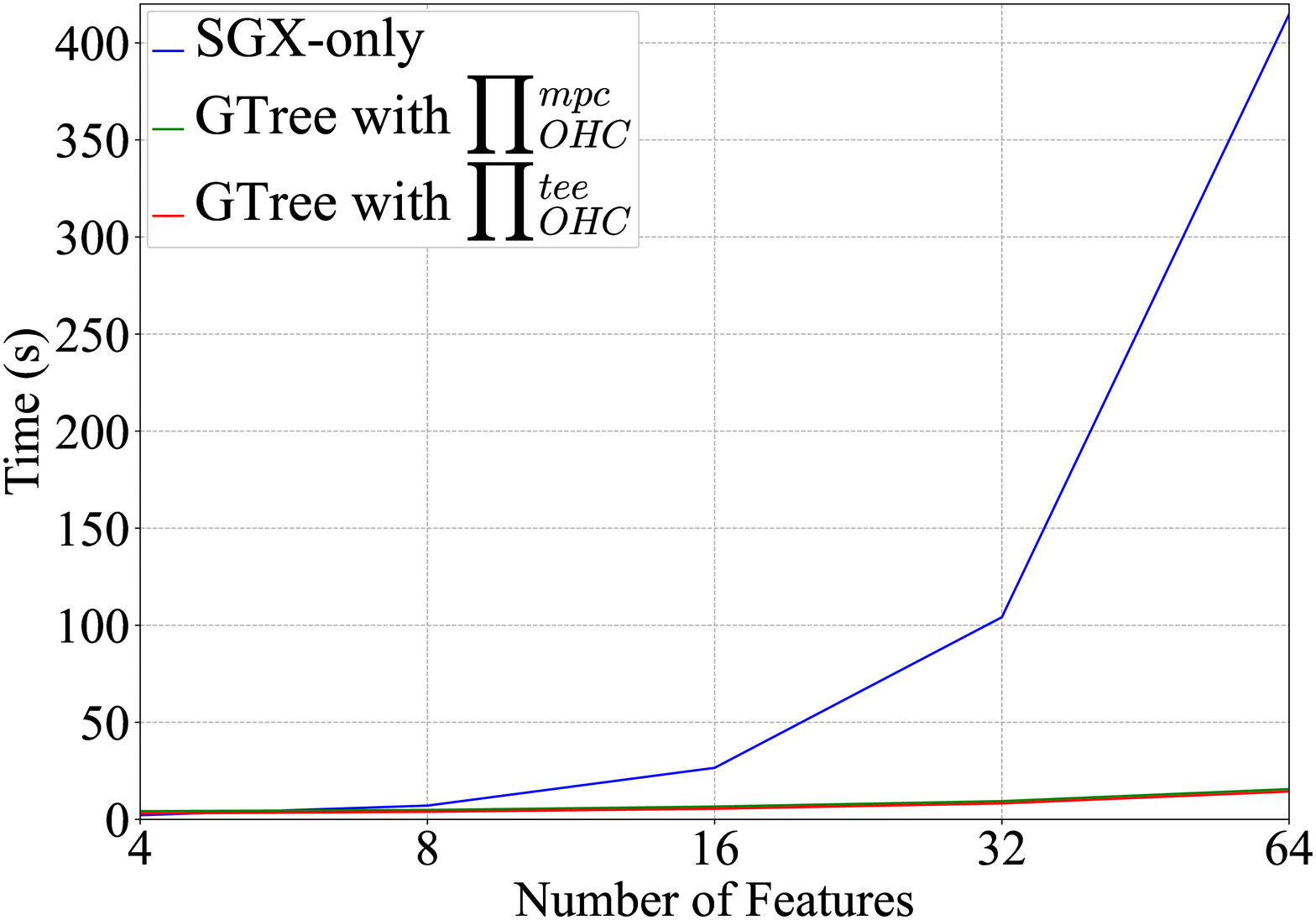}
        \caption{\small Training, 50,000 data and 5 levels}
        \label{fig:sgx_only_gtree:sub2}
    \end{subfigure}
    \begin{subfigure}[t]{0.65\columnwidth}
      \centering   
      \includegraphics[width=\linewidth]{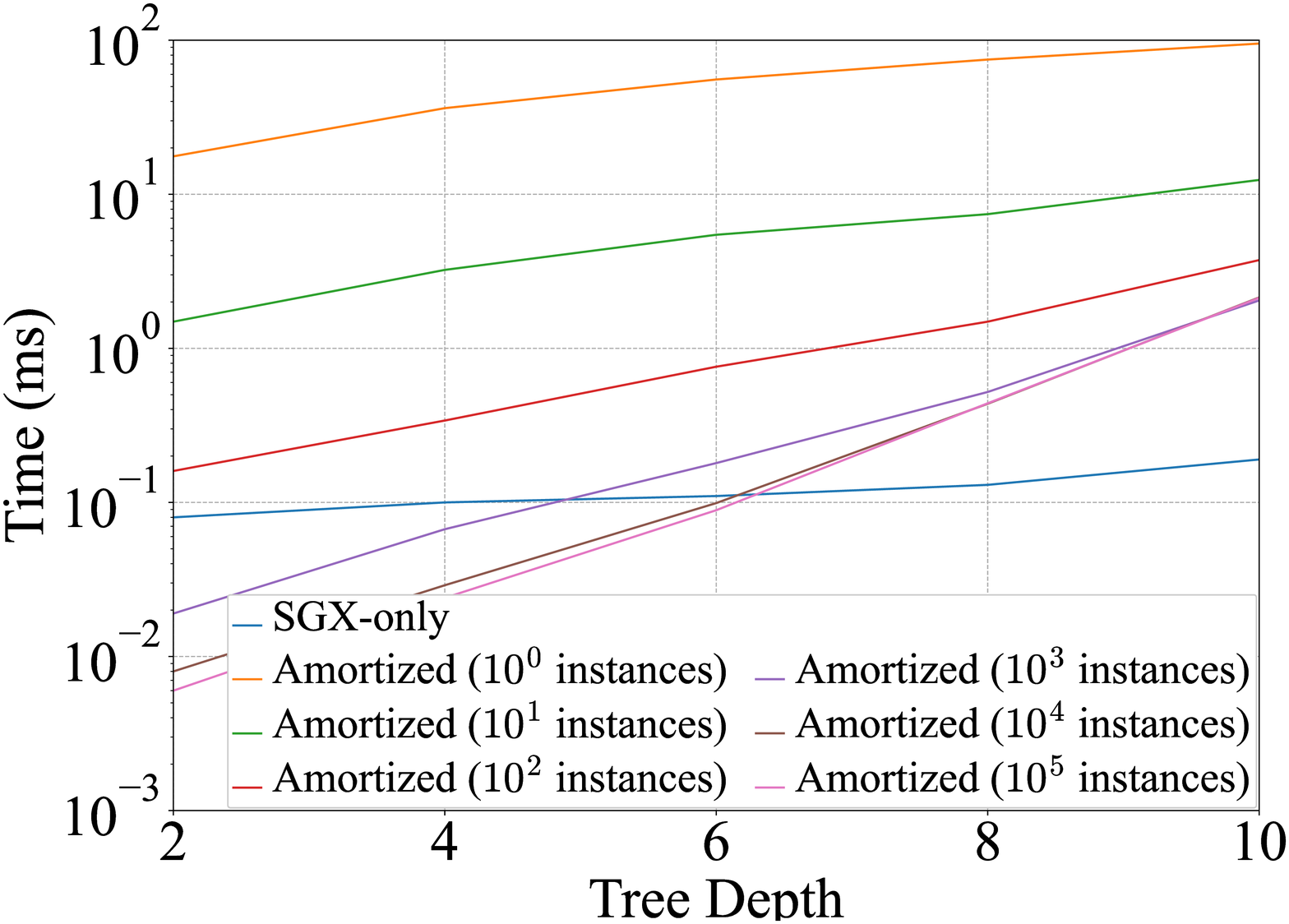}
        \caption{\small Inference, 10 features}
        \label{fig:sgx_only_gtree:sub3}
    \end{subfigure}
\caption{
\label{fig:sgx_only_gtree}
\small Performance of $\mathtt{SGX~only}$ and \sys for training and inference. Note that $y$-axis in Fig.~\ref{fig:sgx_only_gtree:sub3} is in logarithm scale.
}
\end{figure*}

As described in most recent works~\cite{abspoel2020secure,adams2022privacy}, Hoogh \etal~\cite{hoogh2014practical} is still the most state-of-the-art approach to process categorical features.
They provide three protocols with different levels of security.
We implement their protocol with the highest security level based on their latest MPC framework, referring to this baseline as $\mathtt{SID3T}$.
However, $\mathtt{SID3T}$ is less secure because it does not protect the tree shape, which is typically one of the most expensive aspects of secure DT training.

\textcolor{black}{In Table~\ref{tab:performance_real}, we first evaluated the performance of DT training using the three commonly used UCI datasets from other DT schemes~\cite{hoogh2014practical,adams2022privacy,abspoel2020secure}.}
Table~\ref{tab:dataset} shows the dataset details.
The DT depths trained over these datasets are 6, 9, and 5. 
$\mathtt{SID3T}$ is tested on the CPU of the same machine used in \sys.
The results are shown in Table~\ref{tab:performance_real}.
For SPECT and Adult, \sys outperforms $\mathtt{SID3T}$ by ${\thicksim}11{\times}$ and ${\thicksim}21{\times}$, respectively.
When training with KRKPA7, \sys's performance is slightly worse than $\mathtt{SID3T}$.
This is because \sys trains a full binary tree up to the depth to protect the tree structure.
Nevertheless, $\mathtt{SID3T}$ avoids this expensive part at the cost of security loss.
Overall, \sys demonstrates superior performance while providing stronger security.

We compare \sys with Liu \etal~\cite{liu2020towards} based on their reported results since their code is not publicly available at the time of writing. 
The largest dataset they used is the Tic-tac-toe dataset with 958 data samples and 9 features.
Training with this dataset, \sys takes 0.43 seconds and 39.61 MB communication cost, yielding ${\thicksim}3,112{\times}$ and ${\thicksim}49{\times}$ improvements, respectively.
Notably, Liu \etal~\cite{liu2020towards} also do not protect the tree structure.

\begin{table}[htbp]
	\centering
        \footnotesize
	\caption{Inference time (in milliseconds)}
	\label{tab:inf_cmp_others}
        \resizebox{!}{0.68cm}{%
	\begin{tabular}{cccccccc}
		\toprule
		\textbf{Dataset} & \textbf{d}  & \textbf{H} & \textbf{GGH~\cite{kiss2019sok}} & \textbf{JZL~\cite{ji2022uc}} & \textbf{\sys} & \textbf{Sp.GGH} & \textbf{Sp.JZL} \\	\hline
		Wine      & 7   & 5  &  14  &  8  &  0.05  & $280{\times}$  & $160{\times}$ \\	
            Breast    & 12  & 7  &  24  &  17  &  0.19  & $126{\times}$  & $90{\times}$ \\	
            Digits    & 47  & 15 &  115 &  34  &  102.7  & $1.1{\times}$  & $0.33{\times}$ \\	\bottomrule
	\end{tabular}
        }
    \begin{tablenotes}
        \footnotesize
        \item \textbf{d}, \textbf{H}: the number of features, depth. \textbf{Sp.GGH} and \textbf{Sp.JZL} represent the speedup that \sys achieves, compared to GGH and JZL, respectively.
    \end{tablenotes}
\end{table}

For DT inference, Kiss \etal~\cite{kiss2019sok} compare most of existing 2PC private inference schemes and identified \textit{GGH} as the most efficient in terms of online runtime. 
Additionally, JZL~\cite{ji2022uc}'s approach stands out as the most efficient 3PC scheme to date. 
In Table~\ref{tab:inf_cmp_others}, we provide an online runtime comparison of \sys with GGH and JZL~\cite{ji2022uc} based on the runtimes reported in their respective papers. 
\textcolor{black}{We perform the preprocessing for discretization on these datasets for \sys.}
The inference times of \sys for the three datasets in Table~\ref{tab:inf_cmp_others} are amortized over $10^4$, $10^4$, and $10^3$ instances (due to GPU memory constraints), respectively. 
Notably, \sys incurs a low offline cost since only edaBits~\cite{makri2021mathsf} need to be generated, compared to GGH and JZL~\cite{ji2022uc}. 
Basically, \sys achieves remarkable performance gains for small and medium-sized trees. 
This is because when the tree depth is small, the benefits of high parallelism in \sys fully offset the increased communication overhead.
However, as the tree depth increases, the communication overhead in \sys does not scale linearly, leading to suboptimal performance in deep trees. 
This calls for future optimizations for deeper trees (see discussions in Section~\ref{sec:conclusion}).

\noindent\textbf{Comparisons with TEE-only Solution.}
As mentioned, protecting data processes with TEE is another research line in the literature. Although GPU itself cannot provide protection, due to its powerful parallelism capability, GPU is a better option. 
Another reason is that existing TEEs (\eg Intel SGX~\cite{costan2016intel}) suffer from side-channel attacks. 
To protect the model effectively from attacks, the tasks performed within TEEs should be oblivious, which is expensive. 
Here, we use Intel SGX as an example and implement secure DT that processes everything inside an SGX enclave using oblivious primitives, \eg $\mathtt{oassign}$ and $\mathtt{oaccess}$~\cite{ohrimenko2016oblivious,law2020secure,poddar2020visor} (denote it as \textit{SGX-only}).
The evaluation results are given in Fig.~\ref{fig:sgx_only_gtree}. 

For training, Fig.~\ref{fig:sgx_only_gtree:sub1} and ~\ref{fig:sgx_only_gtree:sub2} show that \sys outperforms the SGX-only solution almost for all cases.  
Note that we omit the case for different tree depths since it has a similar trend as in Fig.~\ref{fig:sgx_only_gtree:sub2}.
For inference (Fig.~\ref{fig:sgx_only_gtree:sub3}), SGX-only is much more efficient than \sys when inferring only one instance. 
However, the performance of \sys improves as the number of instances increases, thanks to better utilization of GPU parallelism. When tree depth reaches 10, GPU resources are fully utilized with concurrent inference of $10^3$ instances. 
For a tree depth of 6, \sys outperforms the SGX-only solution when inferring more than $10^4$ instances concurrently, with the advantage becoming more pronounced at shallower tree depths. Additionally, \sys's performance will improve further with more powerful GPUs.

\noindent\textbf{Comparisons with Plaintext Training.}
As done in CryptGPU~\cite{tan2021cryptgpu}, we report the comparison results between $\mathtt{Insecure}$ baseline and \sys, where $\mathtt{Insecure}$ baseline trains the same datasets in plaintext with one GPU. 
For CNN training, CryptGPU still adds roughly $2000\times$ overhead compared with the insecure training. 
For DT training with \sys, the gap is less than $1673\times$. 
However, the gap between them is still large, which highlights the need to develop more GPU-friendly cryptographic primitives in the future.

\section{Security Analysis}
\label{sec:security}

We prove the security of our protocols using the real-world/ideal-world simulation paradigm~\cite{goldreich2019play}.
We define the entities as follows: $\mathcal{A}$: A \textit{static semi-honest} probabilistic polynomial time (PPT) real-world adversary; $\mathcal{S}$: The corresponding ideal-world adversary (simulator); $\mathcal{F}$: The ideal functionality. The adversary $\mathcal{A}$ can corrupt at most one party at the beginning and follows the protocol honestly.
In the real world, the parties interact with $\mathcal{A}$ and the environment $\mathcal{Z}$, executing the protocol as specified.
In the ideal world, parties send their inputs to a trusted party that computes the functionality accurately.
For every real-world adversary $\mathcal{A}$, there exists a simulator $\mathcal{S}$ in the ideal world such that no environment $\mathcal{Z}$ can distinguish between the real and ideal worlds. 
This ensures that any information $\mathcal{A}$ can extract in the real world can also be extracted by $\mathcal{S}$ in the ideal world.
We use multiple sub-protocols in the sequential model and employ the hybrid model for security proofs. The hybrid model simplifies proof analysis by replacing sub-protocols with their corresponding ideal functionalities.
A protocol that invokes a functionality $\mathcal{F}$ is said to be in an ``$\mathcal{F}$-hybrid model".

We define the respective simulators $\mathcal{F}_{\rm EQ}$, $\mathcal{F}_{\rm OAA}$, $\mathcal{F}_{\rm OL}$, $\mathcal{F}_{\rm OHC}^{mpc}$, $\mathcal{F}_{\rm OHC}^{sgx}$, $\mathcal{F}_{\rm ONS}$ and $\mathcal{F}_{\rm ODTI}$ for protocols ${\prod}_{\rm EQ}$, ${\prod}_{\rm OAA}$, ${\prod}_{\rm OL}$, ${\prod}_{\rm OHC}^{mpc}$, ${\prod}_{\rm OHC}^{sgx}$, ${\prod}_{\rm ONS}$ and ${\prod}_{\rm ODTI}$.
The ideal functionalities for $\mathcal{F}_{\rm Mult}$ and $\mathcal{F}_{\rm Reconst}$ are identical to prior works~\cite{wagh2020falcon}.
We prove security using the standard indistinguishability argument. 
By setting up hybrid interactions where their ideal functionalities replace sub-protocols, these interactions can be simulated as indistinguishable from real ones.

The following analysis and theorems demonstrate this indistinguishability.

\noindent \textbf{Security of ${\prod}_{\rm EQ}$.} In ${\prod}_{\rm EQ}$, we mainly modify from Rabbit~\cite{makri2021mathsf}, where the involved computations are all local.
Therefore, the simulator for $\mathcal{F}_{\rm EQ}$ follows easily from the original protocol, which has been proved secure.

\noindent \textbf{Security of ${\prod}_{\rm OAA}$.} We capture the security of ${\prod}_{\rm OAA}$ as Theorem~\ref{theorem_oaa} and give the detailed proof as follows.

\begin{theorem}
\label{theorem_oaa}
${\prod}_{\rm OAA}$ securely realizes $\mathcal{F}_{\rm OAA}$, in the presence of one semi-honest party in the ($\mathcal{F}_{\rm Mult}$, $\mathcal{F}_{\rm EQ}$, $\mathcal{F}_{\rm SelectShare}$)-hybrid model.
\end{theorem}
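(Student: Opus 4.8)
The plan is to prove Theorem~\ref{theorem_oaa} via the real-world/ideal-world simulation paradigm in the ($\mathcal{F}_{\rm Mult}$, $\mathcal{F}_{\rm EQ}$, $\mathcal{F}_{\rm SelectShare}$)-hybrid model. The key structural observation is that ${\prod}_{\rm OAA}$ (Algorithm~\ref{alg2}) is assembled entirely from calls to $\mathcal{F}_{\rm EQ}$ and $\mathcal{F}_{\rm SelectShare}$ (the latter realized via $\mathcal{F}_{\rm Mult}$) together with purely local steps: initializing the zero-shared array $\share{\bm{O}}^A$ and the local summation on Line~\ref{alg2:line5}. Hence the argument reduces to composing sub-simulators rather than reasoning about raw messages. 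Concretely, for a static semi-honest $\mathcal{A}$ corrupting one party $P_i$, I would build a simulator $\mathcal{S}$ that receives $P_i$'s input shares of $\bm{W}$ and $\bm{U}$ and must produce a view indistinguishable from the real transcript.

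First I would establish correctness, so that ${\prod}_{\rm OAA}$ genuinely realizes $\mathcal{F}_{\rm OAA}$ rather than merely leaking nothing. By the semantics of ${\prod}_{\rm EQ}$, the Boolean-shared vector $\share{\bm{E}}^B$ produced on Line~\ref{alg2:line2} is the one-hot indicator equal to $1$ exactly at index $u$ and $0$ elsewhere. Passing it to $\mathtt{SelectShare}$ on Line~\ref{alg2:line4} copies $\bm{W}[u]$ into $\bm{O}[u]$ while leaving every other entry of $\bm{O}$ at $0$, so the local sum $z_u=\sum_{k}\bm{O}[k]=\bm{W}[u]$ matches the functionality output. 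This holds independently for each $u\in\bm{U}$, since distinct iterations write to disjoint accumulators.

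For privacy, $\mathcal{S}$ processes the loop iteration by iteration. At each invocation of $\mathcal{F}_{\rm EQ}$ and $\mathcal{F}_{\rm SelectShare}$, $\mathcal{S}$ runs the corresponding sub-simulator (which exists by the hybrid assumption) to produce $P_i$'s fresh output shares; under 2-out-of-3 RSS the pair held by any single party is uniformly random and independent of the underlying secret, so the simulated shares are distributed exactly as in the real run. The local operations — allocating the zero-shared $\bm{O}$ and summing its entries — are deterministic functions of shares $\mathcal{S}$ already holds and involve no interaction, so they are reproduced verbatim. Crucially, $\bm{E}$ is never reconstructed, so the target index $u$, i.e. the access pattern, is information-theoretically hidden from $P_i$, which is precisely what $\mathcal{F}_{\rm OAA}$ must conceal. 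Sequential composition of the sub-simulators then yields a single $\mathcal{S}$ whose output is indistinguishable from $\mathcal{A}$'s view.

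The main obstacle I anticipate is output consistency across the composition: $\mathcal{S}$ receives $P_i$'s share of the final $\bm{Z}$ from $\mathcal{F}_{\rm OAA}$, yet in the real protocol that share arises as a local sum of $\mathcal{F}_{\rm SelectShare}$ outputs, so the simulated transcript must be made consistent with the value dictated by the functionality. I would handle this in the standard way, having $\mathcal{S}$ program the shares returned by the final $\mathcal{F}_{\rm SelectShare}$ simulator so that the deterministic summation reproduces exactly the share supplied by $\mathcal{F}_{\rm OAA}$; because a single party's RSS shares are uniform, this programming leaves the output distribution unchanged. With correctness and this consistency step in place, indistinguishability follows and the theorem holds.
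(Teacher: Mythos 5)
Your proposal is correct and takes essentially the same route as the paper's own proof: a hybrid-argument composition in which the calls to $\mathcal{F}_{\rm EQ}$ and $\mathcal{F}_{\rm SelectShare}$ are simulated by their respective sub-simulators and the remaining steps (initializing $\share{\bm{O}}^A$ and the summation) are local operations needing no simulation. You additionally spell out correctness of the one-hot selection and the output-consistency programming of the final $\mathcal{F}_{\rm SelectShare}$ shares, details the paper's terser proof leaves implicit.
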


\begin{proof}
The simulation follows easily from the protocol and the hybrid argument.
The simulator runs the first iteration of the loop (Step~\ref{alg2:line1}) and in the process extracts the inputs. 
Then it proceeds to complete all the iterations of the loop.
The simulator for $\mathcal{F}_{\rm EQ}$ can be used to simulate the transcripts from Step~\ref{alg2:line2}.
The simulator for $\mathcal{F}_{\rm SelectShare}$ follows from the protocol in Falcon~\cite{wagh2020falcon}.
Steps~\ref{alg2:line3} and ~\ref{alg2:line5} are all local operations and do not need simulation. 
Therefore, ${\prod}_{\rm OAA}$is secure in the ($\mathcal{F}_{\rm Mult}$, $\mathcal{F}_{\rm EQ}$, $\mathcal{F}_{\rm SelectShare}$)-hybrid model.
\end{proof}

\noindent \textbf{Security of ${\prod}_{\rm OL}$.} We capture the security of ${\prod}_{\rm OL}$ as Theorem~\ref{theorem_ol} and give the detailed proof as follows.
\begin{theorem}
\label{theorem_ol}
${\prod}_{\rm OL}$ securely realizes $\mathcal{F}_{\rm OL}$, in the presence of one semi-honest party in the ($\mathcal{F}_{\rm Mult}$, $\mathcal{F}_{\rm EQ}$, $\mathcal{F}_{\rm OAA}$, $\mathcal{F}_{\rm SelectShare}$)-hybrid model.
\end{theorem}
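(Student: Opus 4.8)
The plan is to reuse the hybrid-argument strategy applied to $\prod_{\rm OAA}$, decomposing Algorithm~\ref{alg3} into local operations (which require no simulation) and invocations of the ideal functionalities $\mathcal{F}_{\rm OAA}$, $\mathcal{F}_{\rm EQ}$, $\mathcal{F}_{\rm Mult}$, and $\mathcal{F}_{\rm SelectShare}$, each simulated by its corresponding sub-simulator. First I would have the simulator $\mathcal{S}$ execute the first iteration of the outer data-sample loop (Line~\ref{alg3:line8}) to extract the corrupted party's inputs, then replay the remaining iterations, calling the appropriate sub-simulator at each sub-protocol step in the order the messages are produced.

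I would then walk through the data-partition block (Lines~\ref{alg3:line1}--\ref{alg3:line5}). The branch test $h\neq 0$ on Line~\ref{alg3:line1} depends only on the public tree level and therefore leaks nothing. When taken, the two oblivious accesses on Lines~\ref{alg3:line2} and~\ref{alg3:line4} are simulated by the simulator for $\mathcal{F}_{\rm OAA}$, which exists by Theorem~\ref{theorem_oaa}, while the update of $\bm{M}$ on Line~\ref{alg3:line5} is a local linear operation on shares needing no simulation. Next I would handle the statistics-counting block (Lines~\ref{alg3:line6}--\ref{alg3:line21}): the initialization on Line~\ref{alg3:line6} is local; the equality tests on Lines~\ref{alg3:line7} and~\ref{alg3:line9} are simulated by the simulator for $\mathcal{F}_{\rm EQ}$; and the Boolean conjunction on Line~\ref{alg3:line10} is simulated as an $\wedge$ over Boolean shares via $\mathcal{F}_{\rm Mult}$. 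Within the nested loops, building $\bm{C'_{\rho}}$ on Lines~\ref{alg3:line14}--\ref{alg3:line15} is purely local, whereas the four products on Lines~\ref{alg3:line16}--\ref{alg3:line19} are realized through $\mathcal{F}_{\rm Mult}$; the masking of $\bm{C'}_{D_i}$ by $\bm{LCF}$ on Line~\ref{alg3:line20} is simulated by the simulator for $\mathcal{F}_{\rm SelectShare}$, and the accumulation on Line~\ref{alg3:line21} is again local addition.

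Since every message the corrupted party observes is produced by one of the above sub-protocol calls, sequential composition in the hybrid model lets me substitute each call by its ideal functionality and stitch the individual simulators together; the resulting view is indistinguishable from the real execution (computationally, inheriting whatever assumption $\mathcal{F}_{\rm EQ}$ relies on). The one point I expect to require care is the composition bookkeeping rather than any single step: I must ensure $\mathcal{S}$ preserves the exact ordering and input--output wiring across the outer loop over data samples and the inner loops over leaves and features, so that the joint transcript stays consistent. Because the per-sample and per-leaf contributions are computed independently before the final batched accumulation, this check is routine, and I would conclude that $\prod_{\rm OL}$ securely realizes $\mathcal{F}_{\rm OL}$ in the $(\mathcal{F}_{\rm Mult},\mathcal{F}_{\rm EQ},\mathcal{F}_{\rm OAA},\mathcal{F}_{\rm SelectShare})$-hybrid model.
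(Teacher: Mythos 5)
Your proposal is correct and follows essentially the same route as the paper's proof: it decomposes Algorithm~\ref{alg3} into local computations (Lines~\ref{alg3:line5},~\ref{alg3:line6},~\ref{alg3:line14},~\ref{alg3:line15},~\ref{alg3:line21}) and invocations of $\mathcal{F}_{\rm OAA}$ (Lines~\ref{alg3:line2},~\ref{alg3:line4}), $\mathcal{F}_{\rm EQ}$ (Lines~\ref{alg3:line7},~\ref{alg3:line9}), $\mathcal{F}_{\rm Mult}$ (Lines~\ref{alg3:line10},~\ref{alg3:line16}--\ref{alg3:line19}), and $\mathcal{F}_{\rm SelectShare}$ (Line~\ref{alg3:line20}), exactly matching the paper's assignment, and concludes by sequentially composing the simulators. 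The extra details you supply (input extraction in the first loop iteration, the public branch on $h$) are harmless refinements of the same hybrid argument.
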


\begin{proof}
The simulator for $\mathcal{F}_{\rm Mult}$ can be used to simulate the transcripts from Steps~\ref{alg3:line10},~\ref{alg3:line16}-~\ref{alg3:line19}.
${\prod}_{\rm OL}$ are sequential combinations of local computations (Steps~\ref{alg3:line5},~\ref{alg3:line6},~\ref{alg3:line11},~\ref{alg3:line14},~\ref{alg3:line15},~\ref{alg3:line21}) and invocations of $\mathcal{F}_{\rm Mult}$, $\mathcal{F}_{\rm EQ}$ (Steps~\ref{alg3:line7},~\ref{alg3:line9}), $\mathcal{F}_{\rm OAA}$ (Steps~\ref{alg3:line2},~\ref{alg3:line4}) and $\mathcal{F}_{\rm SelectShare}$ (Step~\ref{alg3:line20}).
The simulation follows directly from composing the simulators.
\end{proof}

\noindent \textbf{Security of ${\prod}_{\rm OHC}^{mpc}$.} 
Protocol ${\prod}_{\rm OHC}^{mpc}$ is composed of protocols such as $\mathtt{Division}$ and $\mathtt{Maxpool}$ from Falcon~\cite{wagh2020falcon}.
The security statement and proof of ${\prod}_{\rm OHC}^{mpc}$ are as follows:

\begin{theorem}
\label{theorem_ohc_mpc}
${\prod}_{\rm OHC}^{mpc}$ securely realizes $\mathcal{F}_{OHC}^{mpc}$, in the presence of one semi-honest party in the ($\mathcal{F}_{\rm Mult}$, $\mathcal{F}_{\rm EQ}$, $\mathcal{F}_{\rm SelectShare}$, $\mathcal{F}_{\rm Maxpool}$, $\mathcal{F}_{\rm Division}$)-hybrid model.
\end{theorem}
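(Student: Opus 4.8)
The plan is to follow exactly the same simulation-based template used for Theorems~\ref{theorem_oaa} and~\ref{theorem_ol}, invoking the composition theorem in the hybrid model. Since $\mathcal{A}$ is static and semi-honest and corrupts at most one of the three parties under an honest majority, it suffices to exhibit a simulator $\mathcal{S}$ that, for the single corrupted party, produces a view computationally indistinguishable from the real execution of ${\prod}_{\rm OHC}^{mpc}$ (Algorithm~\ref{alg_mpc}). First I would classify every step of the protocol as either a purely local computation or an invocation of one of the ideal functionalities listed in the hybrid model. Concretely, the computation of $\share{\bm{P}[i]}^A$ and $\share{\bm{Q}[i]}^A$ in Step~\ref{alg_mpc:line3} reduces, by Equation~\ref{eq_gini_reduction}, to linear operations together with squarings, so the linear parts are local while each multiplication is handled by $\mathcal{F}_{\rm Mult}$. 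The division in Step~\ref{alg_mpc:line4} is a call to $\mathcal{F}_{\rm Division}$, the feature-masking and update steps (Steps~\ref{alg_mpc:line5},~\ref{alg_mpc:line7},~\ref{alg_mpc:line8},~\ref{alg_mpc:line10}) combine $\mathcal{F}_{\rm EQ}$ and $\mathcal{F}_{\rm SelectShare}$ (with the conjunctions in Step~\ref{alg_mpc:line10} realized via $\mathcal{F}_{\rm Mult}$ over Boolean shares), and Step~\ref{alg_mpc:line6} is a call to $\mathcal{F}_{\rm Maxpool}$.

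Having made this decomposition, the construction of $\mathcal{S}$ is mechanical: $\mathcal{S}$ runs the protocol on behalf of the honest parties using freshly sampled random RSS shares for the corrupted party's inputs, and wherever the real protocol calls a sub-functionality it invokes the corresponding sub-simulator ($\mathcal{S}_{\rm Mult}$, $\mathcal{S}_{\rm EQ}$, $\mathcal{S}_{\rm SelectShare}$, $\mathcal{S}_{\rm Maxpool}$, $\mathcal{S}_{\rm Division}$), each of which is guaranteed to exist by the security of the respective primitive from Falcon~\cite{wagh2020falcon} and the earlier theorems. The local operations (linear combinations, the reuse of precomputed products noted after Algorithm~\ref{alg3}) produce no transcript and need no simulation. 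Indistinguishability then follows by a standard hybrid argument that walks through the loop iterations in Steps~\ref{alg_mpc:line1}--\ref{alg_mpc:line10}, replacing one real sub-protocol invocation with its ideal-functionality/simulator pair at a time; each adjacent pair of hybrids is indistinguishable by the security of the swapped primitive, and the RSS shares observed by the single corrupted party are uniformly random and independent of the secret in each intermediate hybrid.

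The main obstacle I anticipate is not any single simulation step but ensuring that the composition is clean: the intermediate values $\bm{P}$, $\bm{Q}$, and $\bm{gini}$ passed between sub-protocols must remain in secret-shared form throughout, so that no functionality ever reveals a reconstructed intermediate to the corrupted party. I would therefore verify explicitly that the output of each sub-functionality is consumed only as shares by the next, which is what licenses the sequential composition and keeps every intermediate hybrid's view simulatable. Assuming the security of the constituent protocols from Falcon~\cite{wagh2020falcon}, the simulation follows directly from composing the simulators, and thus ${\prod}_{\rm OHC}^{mpc}$ securely realizes $\mathcal{F}_{OHC}^{mpc}$ in the stated hybrid model.
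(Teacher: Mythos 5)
Your proposal is correct and follows essentially the same route as the paper's proof: the paper likewise argues that ${\prod}_{\rm OHC}^{mpc}$ is a sequential combination of local computations and invocations of the listed functionalities, with the simulators for $\mathcal{F}_{\rm Maxpool}$, $\mathcal{F}_{\rm Division}$, and $\mathcal{F}_{\rm SelectShare}$ taken from Falcon~\cite{wagh2020falcon}, and concludes by composing simulators in the hybrid model. Your write-up is simply a more explicit version of the same argument (mapping each step of Algorithm~\ref{alg_mpc} to its functionality and spelling out the hybrid walk), which the paper leaves implicit.
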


\begin{proof}
The simulators for $\mathcal{F}_{\rm Maxpool}$, $\mathcal{F}_{\rm Division}$ and $\mathcal{F}_{\rm SelectShare}$ follow from Falcon~\cite{wagh2020falcon}.
${\prod}_{\rm OHC}^{mpc}$ is sequential combinations of local computations and the corresponding simulators.
\end{proof}





\noindent \textbf{Security of ${\prod}_{\rm OHC}^{sgx}$.}
We explain the ideal functionality for ${\prod}_{\rm OHC}^{sgx}$ according to the proof of $\mathcal{F}_{\rm attest}$ in CRYPTFLOW~\cite{kumar2020cryptflow}.
$\mathcal{F}_{\rm OHC}^{sgx}$ is realized as follows:
Intel SGX guarantees confidentiality by creating a secure enclave where code and data can be securely executed and stored.
Initially, when SGX receives a command for computing ${\prod}_{\rm OHC}$, it performs a remote attestation with the party.
Once attested, the data transmitted between the party and the enclave will be encrypted with a secret key $sk$.
Upon receiving input from the parties, the enclave executes the code and produces secret-shared outputs encrypted under $sk$.
When running inside the enclave, even if an adversary compromises the host system, he cannot learn the data from the enclave.

\noindent \textbf{Security of ${\prod}_{\rm ONS}$.} We capture the security of ${\prod}_{\rm ONS}$ as Theorem~\ref{theorem_ons} and give the detailed proof as follows.

\begin{theorem}
\label{theorem_ons}
${\prod}_{\rm ONS}$ securely realizes $\mathcal{F}_{\rm ONS}$, in the presence of one semi-honest party in the ($\mathcal{F}_{\rm EQ}$, $\mathcal{F}_{\rm SelectShare}$)-hybrid model.
\end{theorem}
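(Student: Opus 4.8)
The plan is to follow exactly the same simulation-paradigm strategy used for Theorems~\ref{theorem_oaa}, \ref{theorem_ol}, and \ref{theorem_ohc_mpc}, since ${\prod}_{\rm ONS}$ is structurally the simplest of the training sub-protocols. First I would inspect Algorithm~\ref{alg5} and classify each step as either a purely local computation (requiring no simulation, because a corrupted party's view is deterministically determined by its own shares) or an invocation of a sub-functionality whose simulator we may assume by the hybrid argument. Concretely, the initialization of $\share{isInternal}^B$ in Step~\ref{alg5:line2} is a call to ${\prod}_{\rm EQ}$, and the oblivious update of each counter array in Step~\ref{alg5:line4} is a call to $\mathtt{SelectShare}$; the concatenation in Step~\ref{alg5:line5} and the replacement of the level-$h$ values of $\share{\bm{T}}^A$ with $\share{\bm{SD}}^A$ in Step~\ref{alg5:line6} are local rearrangements of shares that produce no communication.

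Next I would build the simulator $\mathcal{S}$ for $\mathcal{F}_{\rm ONS}$ by composition. Given the corrupted party's input shares, $\mathcal{S}$ runs the loop over $\rho$, invoking the simulator for $\mathcal{F}_{\rm EQ}$ to produce the transcript of Step~\ref{alg5:line2} and the simulator for $\mathcal{F}_{\rm SelectShare}$ (which follows from Falcon~\cite{wagh2020falcon}) for Step~\ref{alg5:line4}. Because the for-loops handle array entries independently, the composed simulator simply concatenates the per-entry transcripts, and the local steps are reproduced by $\mathcal{S}$ applying the same deterministic operations to the simulated shares. The security then reduces to the standard sequential-composition theorem in the ($\mathcal{F}_{\rm EQ}$, $\mathcal{F}_{\rm SelectShare}$)-hybrid model: replacing each sub-protocol by its ideal functionality yields a transcript indistinguishable from the real execution, so no environment $\mathcal{Z}$ can distinguish the two worlds.

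The one subtlety I would be careful to argue, rather than the raw composition, is that Step~\ref{alg5:line4} selects between the inherited counter $\share{\bm{C_{\rho}}}^A$ and the all-zero array $\share{0}^A$ based on the \emph{secret} bit $\share{isInternal}^B$, so that which branch is taken must remain hidden. Since $\mathtt{SelectShare}$ operates entirely on shares and, by its own security guarantee, leaks nothing about the selection bit, the resulting shares are uniformly distributed from the corrupted party's viewpoint regardless of whether the parent is internal; thus the access pattern and the chosen branch are concealed exactly as required by the obliviousness claim. I expect this to be the only place where I must explicitly invoke the hiding property of the selection bit rather than treat the step as a black-box functionality call. Consequently the proof reads: ${\prod}_{\rm ONS}$ is a sequential combination of local computations (Steps~\ref{alg5:line5}, \ref{alg5:line6}) and invocations of $\mathcal{F}_{\rm EQ}$ (Step~\ref{alg5:line2}) and $\mathcal{F}_{\rm SelectShare}$ (Step~\ref{alg5:line4}), and the simulation follows directly from composing the corresponding simulators, establishing the theorem.
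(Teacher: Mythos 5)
Your proposal is correct and follows essentially the same approach as the paper's proof, which likewise establishes Theorem~\ref{theorem_ons} by sequentially composing the simulators for $\mathcal{F}_{\rm EQ}$ (Step~\ref{alg5:line2}) and $\mathcal{F}_{\rm SelectShare}$ (Step~\ref{alg5:line4}) while treating the remaining steps as local computations requiring no simulation. Your added remark on the hiding of the selection bit is a harmless elaboration of the black-box $\mathcal{F}_{\rm SelectShare}$ guarantee that the paper leaves implicit.
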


\begin{proof}
Similar to the proof of Theorem~\ref{theorem_ol}, simulation works by sequentially composing the simulators for $\mathcal{F}_{\rm EQ}$ and $\mathcal{F}_{\rm SelectShare}$.
\end{proof}

\noindent \textbf{Security of ${\prod}_{\rm ODTI}$.} We capture the security of ${\prod}_{\rm ODTI}$ in Theorem~\ref{theorem_odti}, and give their formal proofs as follows:



\begin{theorem}
\label{theorem_odti}
${\prod}_{\rm ODTI}$ securely realizes $\mathcal{F}_{\rm ODTI}$, in the presence of one semi-honest party in the $\mathcal{F}_{\rm OAA}$-hybrid model.
\end{theorem}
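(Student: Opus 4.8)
The plan is to follow the same hybrid-model composition strategy used for the preceding theorems, since ${\prod}_{\rm ODTI}$ (Algorithm~\ref{alg8}) is structurally a sequential loop whose only non-local operations are invocations of ${\prod}_{\rm OAA}$. First I would have the simulator $\mathcal{S}$ extract the corrupted party's inputs by running the protocol from the initialization step (Line~\ref{alg8:line1}), where $\share{\bm{Tidx}}^A$, $\share{\bm{Tval}}^A$, and $\share{\bm{Dval}}^A$ are set up locally and hence require no simulation. The core of the argument is that the $\mathtt{while}$-loop body consists of exactly two kinds of steps: calls to $\mathcal{F}_{\rm OAA}$ (Lines~\ref{alg8:line3} and~\ref{alg8:line5}), and purely local arithmetic on shares (the update $\share{\bm{Tidx}}^A = 2\share{\bm{Tidx}}^A + \share{\bm{Dval}}^A + 1$ on Line~\ref{alg8:line6}, which is a linear operation on arithmetic shares as defined in the secret-sharing preliminaries).

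The key steps, in order, are as follows. The simulator invokes the $\mathcal{F}_{\rm OAA}$ simulator (guaranteed to exist by Theorem~\ref{theorem_oaa}) to generate the transcript for each oblivious-access call in each loop iteration; I would note that the number of iterations is exactly the public tree depth $H$, which is revealed by the protocol and is part of the allowed leakage, so the loop-counter $n_{iter}$ induces no dependence on secret data and the loop structure itself is data-independent. The inner loop over $i \in \{0,\dots,N_I-1\}$ (Line~\ref{alg8:line4}) is likewise controlled only by the public batch size $N_I$, so all $N_I$ oblivious accesses in a given level can be simulated independently by repeated invocation of the $\mathcal{F}_{\rm OAA}$ simulator. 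The local linear update on Line~\ref{alg8:line6} and the final assignment $\share{\bm{R}}^A = \share{\bm{Tval}}^A$ (Line~\ref{alg8:line8}) produce no messages, hence need no simulation. Composing these simulators sequentially across all $H$ levels yields a full transcript indistinguishable from the real execution, and the conclusion is that ${\prod}_{\rm ODTI}$ securely realizes $\mathcal{F}_{\rm ODTI}$ in the $\mathcal{F}_{\rm OAA}$-hybrid model.

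The main obstacle, and the one place I would argue most carefully, is demonstrating that the control flow of the protocol leaks nothing beyond $N_I$ and $H$. Because \sys pads the tree into a complete binary tree (so every path has length exactly $H$) and because the loop iterates a public number of times regardless of which leaves are reached, the sequence of $\mathcal{F}_{\rm OAA}$ invocations is identical for every possible input --- this is precisely the property that the oblivious-array-access design was built to provide. I would therefore emphasize that the secret-dependent quantities ($\share{\bm{Tidx}}^A$, the accessed tree values, the fetched instance features) never influence \emph{which} operations are performed, only the shared values flowing through them, and those shares are already simulatable via $\mathcal{F}_{\rm OAA}$. Once this data-independence of the execution pattern is established, the remainder reduces, exactly as in the proofs of Theorems~\ref{theorem_ol} and~\ref{theorem_ons}, to the routine sequential composition of a single ideal functionality's simulator with local computations.
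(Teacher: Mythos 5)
Your proposal is correct and follows essentially the same route as the paper's proof: sequential composition of the $\mathcal{F}_{\rm OAA}$ simulator in the hybrid model, with local share operations requiring no simulation. The paper states this in two sentences; your additional observations (the loop structure depends only on the public $H$ and $N_I$, and the linear update on Line~\ref{alg8:line6} is purely local) are exactly the implicit justifications behind the paper's terse argument, not a different approach.
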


\begin{proof}
The simulation follows easily from the hybrid argument.
Simulation works by sequentially composing $\mathcal{F}_{\rm OAA}$ and hence is simulated using the corresponding simulator.
\end{proof}

\section{Related Work}
\label{sec:related_work}
In this section, we review existing privacy-preserving approaches for general ML algorithms and explore the most recent works that leverage GPU acceleration. We finally survey the work for privacy-preserving DT training and inference.

\subsection{Private machine learning using GPU}
Privacy-preserving ML has received considerable attention over recent years.
Recent works operate in different models such as deep learning~\cite{wagh2020falcon,ohrimenko2016oblivious} and tree-based models~\cite{abspoel2020secure,adams2022privacy,liu2020towards,law2020secure,lindell2000privacy,emekcci2007privacy,hoogh2014practical,tueno2019private,tai2017privacy}.
These works rely on different privacy-preserving techniques, such as MPC~\cite{wagh2020falcon,abspoel2020secure,adams2022privacy,lindell2000privacy,emekcci2007privacy,hoogh2014practical,tueno2019private}, HE~\cite{liu2020towards,tai2017privacy}, TEE~\cite{ohrimenko2016oblivious,law2020secure}.
However, most of them demonstrate a CPU-only implementation and focus mainly on improving the performance of their specified protocols. 
There is an urgent need to further improve practical performance when deploying these in real-world applications.

Recently, few works explored GPU-based MPC in private deep learning.
CryptGPU~\cite{tan2021cryptgpu} shows the benefits of GPU acceleration for both training and inference on top of the CrypTen framework.
GForce~\cite{ng2021gforce} proposes an online/offline GPU/CPU design for inference with GPU-friendly protocols.
Visor~\cite{poddar2020visor} is the first to combine the CPU TEE with GPU TEE in video analytics, yet it is closed-source.
Moreover, it requires hardware modification which would adversely affect compatibility.
Overall, all of these works focus on deep learning including massive GPU-friendly computations (\ie convolutions and matrix multiplications).
\sys is the first to support secure DT training and inference on the GPU.

\subsection{Privacy-preserving Decision Tree}
\noindent\textbf{Inference.}
Most of the existing works~\cite{bost2015machine,wu2016privately,de2017efficient,tueno2019private,tai2017privacy,bai2022scalable,ohrimenko2016oblivious} focus mainly on DT inference. 
Given a pre-trained DT model, they ensure that the QU (or say client) learns as little as possible about the model and the CSP learns nothing about the queries.
SGX-based approaches~\cite{ohrimenko2016oblivious} have demonstrated that they are orders of magnitude faster than
cryptography-based approaches~\cite{bost2015machine,wu2016privately,de2017efficient,tueno2019private,tai2017privacy}.
However, all of these works only consider the scenario of a single query from the QU.
When encountering a large number of concurrent queries, the performance of the above approaches degrades significantly.
\sys is superior due to GPU parallelism.

\noindent\textbf{Training.}
Privacy-preserving training~\cite{abspoel2020secure,adams2022privacy,liu2020towards,law2020secure,lindell2000privacy,emekcci2007privacy,hoogh2014practical} is naturally more difficult than inference.
This is because the training phase involves more information leakage and more complex functions.
Since Lindell \etal~\cite{lindell2000privacy} initialize the study of privacy-preserving data mining, there has been a lot of research~\cite{emekcci2007privacy,hoogh2014practical} on DT training.
However, most of them focus only on data privacy while do not consider the model (\eg tree structures).
In more recent work, Liu \etal~\cite{liu2020towards} design the protocols for both training and inference on categorical data by leveraging additive HE and secret sharing.
However, they do not protect the patterns of building the tree and take over 20 minutes to train a tree of depth 7 with 958 samples.
More recently, schemes~\cite{abspoel2020secure,adams2022privacy} propose new training algorithms for processing continuous data using MPC.
However, Abspoel \etal~\cite{abspoel2020secure} have to train the random forest instead of a single DT when processing a large dataset.
Adams \etal~\cite{adams2022privacy} train other models (\eg random forest and extra-trees classifier) to bypass some expensive computations in the original DT. 
All in all, they do not explore the use of GPU in private DT domain.
\section{Conclusion and Future Work}
\label{sec:conclusion}

In this work, we propose \sys, the first framework that combines privacy-preserving decision tree training and inference with GPU acceleration. 
\sys achieves a stronger security guarantee than previous work, where both the access pattern and the tree shape are also hidden from adversaries, in addition to the data samples and tree nodes. 
\sys is designed in a GPU-friendly manner that can take full advantage of GPU parallelism. 
Our experimental results show that \sys outperforms previous CPU-based solutions by at least $11\times$ for training. 
Overall, \sys shows that GPU is also suitable to accelerate privacy-preserving DT training and inference.  
For future work, we will investigate the following directions. 


    \noindent{\textbf{ORAM-based Array Access.}} ORAM exhibits sub-linear communication costs in comparison to linear scans.  Existing DORAM approaches such as Floram~\cite{doerner2017scaling} rely heavily on garbled circuits, which are unlikely to yield optimal performance gains. We will design a new GPU-friendly ORAM structure. Furthermore, FSS-based protocols are communication-efficient~\cite{jawalkar2023orca}, and we plan to use them to optimize \sys.

    \noindent{\textbf{A General-Purpose Framework for Various Data Types.}} \textcolor{black}{\sys primarily targets categorical features. However, when dealing with continuous features, we can either apply discretization techniques~\cite{adams2022privacy} to convert them into categorical data or utilize MPC-based permutation and sorting methods~\cite{abspoel2020secure} for direct processing. It would be valuable to investigate the GPU compatibility of these protocols within a hybrid secure DT framework and explore the potential performance gains through GPU acceleration.}

\bibliographystyle{elsarticle-num}
\bibliography{ref}


%



\end{document}